\newtheorem{theorem}{Theorem}[section]
\newtheorem{lemma}[theorem]{Lemma}
\newtheorem{proposition}[theorem]{Proposition}
\theoremstyle{definition}
\newtheorem{hypothesis}[theorem]{Hypothesis}
\theoremstyle{remark}
\newtheorem{remark}[theorem]{Remark}
\newtheorem{example}[theorem]{Example}
\newcommand{\hilb}{\mathcal{H}}
\newcommand{\fock}{\mathcal{F}}
\newcommand{\hfrak}{\mathfrak{H}}
\newcommand{\e}{\mathrm{e}}
\newcommand{\g}{\mathrm{g}}
\renewcommand{\Im}{\operatorname{Im}}
\renewcommand{\a}[1]{a\!\left(#1\right)}
\newcommand{\adag}[1]{a^\dag\!\left(#1\right)}
\newcommand{\ii}{\mathrm{i}}
\newcommand{\dOmega}{\mathrm{d}\Gamma(\omega)}
\def\smalloverbrace#1{\mathop{\vbox{\m@th\ialign{##\crcr\noalign{\kern3\p@}%
				\tiny\downbracefill\crcr\noalign{\kern3\p@\nointerlineskip}%
				$\hfil\displaystyle{#1}\hfil$\crcr}}}\limits}
\definecolor{cblue}{rgb}{0.16, 0.32, 0.75}
\definecolor{cred}{rgb}{0.7, 0.11, 0.11}
\title{\textbf{Self-adjointness of a class of multi-spin--boson models \\ with ultraviolet divergences}}
\author[$1,2,3,\star$]{Davide Lonigro\hspace{3pt}\orcidlink{0000-0002-0792-8122}\hspace{1pt}}
\affil[$1$]{\small Dipartimento di Matematica, Universit\`a degli Studi di Bari Aldo Moro, I-70125 Bari, Italy}
\affil[$2$]{\small Dipartimento di Fisica, Universit\`a degli Studi di Bari Aldo Moro, I-70126 Bari, Italy}
\affil[$3$]{\small Istituto Nazionale di Fisica Nucleare, Sezione di Bari, I-70126 Bari, Italy}
\affil[$\star$]{\small \texttt{davide.lonigro@ba.infn.it}}
\begin{document}
	
\vspace{-1cm}
\maketitle\vspace{-0.5cm}
	
\begin{abstract}
	We study a class of quantum Hamiltonian models describing a family of $N$ two-level systems (spins) coupled with a structured boson field of positive mass, with a rotating-wave coupling mediated by form factors possibly exhibiting ultraviolet divergences. Spin--spin interactions which do not modify the total number of excitations are also included. Generalizing previous results in the single-spin case, we provide explicit expressions for the self-adjointness domain and the resolvent of these models, both of them carrying an intricate dependence on the spin--field and spin--spin coupling via a family of concatenated propagators. This construction is also shown to be stable, in the norm resolvent sense, under approximations of the form factors via normalizable ones, for example an ultraviolet cutoff.
\end{abstract}
	
\section{Introduction}\label{sec:intro}

Consider an arbitrary number $N$ of distinguishable two-level systems (spins), whose free energy---possibly involving interactions between the spins---is described by a symmetric operator $K$ on a Hilbert space $\mathfrak{h}\simeq\mathbb{C}^{2^N}$; and a boson field, its free energy being represented by the second quantization $\dOmega$ of a self-adjoint operator $\omega$ on some single-particle Hilbert space $\hilb$, which is a self-adjoint operator on the Bose--Fock space $\fock(\hilb)$. For definiteness, we shall set $\hilb$ as the space of square-integrable functions, $\hilb=L^2_\mu(X)$, on some measure space $(X,\Sigma,\mu)$, and $\omega$ as the multiplication operator associated with some measurable function $X\ni k\mapsto\omega(k)\in\mathbb{R}$, playing the role of the dispersion relation of the field. We shall also require $\omega\geq m>0$ for some $m>0$. By setting
\begin{equation}\label{eq:model0}
	\hfrak=\mathfrak{h}\otimes\fock(\hilb)\simeq\mathbb{C}^{2^N}\otimes\fock(\hilb),\qquad H_{\rm free}=K\otimes\mathrm{I}_{\fock(\hilb)}+\mathrm{I}_{\mathfrak{h}}\otimes\dOmega,
\end{equation} 
where $\mathrm{I}_{\mathcal{K}}$ is the identity operator on the space $\mathcal{K}$, we shall consider operators on $\hfrak$ in the form
\begin{equation}\label{eq:model}
	H_{f_1,\dots,f_N}=H_{\rm free}+\sum_{j=1}^N\left(\sigma_j^-\otimes\adag{f_j}+\sigma_j^+\otimes\a{f_j}\right),
\end{equation}
where $\sigma_j^\pm$, $j=1,\dots,N$, are the operators implementing a transition between the excited and ground state of the $j$th spin, and $\a{f_j}$, $\adag{f_j}$ are the annihilation and creation operators associated with a family of functions $f_1,\dots,f_N\in\hilb$, called the \textit{form factors}; these are mutually adjoint closed operators $\a{f}$, $\adag{f}$ in $\fock(\hilb)$, with dense domain $\mathcal{D}(\a{f})=\mathcal{D}(\adag{f})$ (see e.g.~\cite{nelson1964interaction,reed1975fourier,bratteli1987operator}). An elementary application of the Kato--Rellich theorem shows that Eq.~\eqref{eq:model} defines a self-adjoint operator with domain $\mathcal{D}(H_{f_1,\dots,f_N})=\mathcal{D}(H_{\rm free})=\mathfrak{h}\otimes\mathcal{D}(\dOmega)$, see e.g.~\cite[Prop. 1.1]{arai1997existence}.

These operators belong to the class of (generalized) spin--boson models, whose relatively simple but rich structure allows for a detailed discussion of their mathematical properties, ranging from the spectrum to the existence and uniqueness of the ground state~\cite{hirokawa2001remarks,hubner1995spectral,hirokawa1999expression,arai1990asymptotic,amann1991ground,davies1981symmetry,fannes1988equilibrium,hubner1995radiative,reker2020existence,hasler2021existence,arai1997existence,arai2000essential,arai2000ground,falconi2015self,takaesu2010generalized,teranishi2015self,teranishi2018absence}, while also providing a simplified but realistic description of many natural phenomena of theoretical and practical interest. In particular, the model~\eqref{eq:model} can describe the interaction between two-level atoms with an electromagnetic field in the regime in which counter-rotating terms---those that do not preserve the total number of excitations---can be neglected, a procedure known as the rotating-wave approximation~\cite{agarwal1971rotating}, thus finding applications in the description of superradiance and subradiance phenomena~\cite{dicke1954coherence,gross1982superradiance,benedict2018super,van2013photon}, e.g. atom--photon bound states in the continuum~\cite{dorner2002laser,tufarelli2013dynamics,sanchez2017dynamical,gonzalez2017efficient,facchi2016bound,facchi2019bound,lonigro2021stationary,lonigro2021selfenergy}.

In this paper we shall address the \textit{ultraviolet problem} for the model~\eqref{eq:model0}---that is, its rigorous implementation as a self-adjoint operator on $\hfrak$ in the case in which $f_j\notin\hilb$ for some $j=1,\dots,N$, which (because of the assumption $\omega\geq m>0$) typically happens when $|f_j(k)|^2$ decreases ``too slowly'' as $|k|\to+\infty$. In this case, Eq.~\eqref{eq:model} does not define, \textit{per se}, a legitimate operator. We shall focus on the case in which the weaker normalization constraint
\begin{equation}\label{eq:minusone}
	\int_X\frac{|f_j(k)|^2}{\omega(k)}\,\mathrm{d}\mu<+\infty,\qquad j=1,\dots,N
\end{equation}
is imposed. In such a case, by interpreting Eq.~\eqref{eq:model} as an infinitesimal form perturbation of the free field operator via standard estimates on the annihilation operators, the existence of a self-adjoint model associated with Eq.~\eqref{eq:model}~\cite[Prop. 4.2]{lonigro2022generalized} follows as an application of the Kato--Lions--Lax--Milgram--Nelson (KLMN) theorem~\cite{teschl2009mathematical,simon2015quantum}; however, no information about the self-adjointness domain (in general, $\mathcal{D}(H_{f_1,\dots,f_N})\neq\mathcal{D}(H_{\rm free})$) nor the resolvent is retained. 

In the case $N=1$, an explicit construction of the self-adjoint operator associated with Eq.~\eqref{eq:model} was provided in~\cite[Theorem 5.5]{lonigro2022generalized} by interpreting the annihilation and creation operators as continuous operators on the scale of Hilbert spaces generated by $\dOmega$, through an explicit computation of the resolvent. The operator reduces to the standard (\textit{regular}) one when the form factor are normalizable; if not (\textit{singular} case), it can be approximated by sequences of regular ones---for example, by introducing an ultraviolet cutoff---in the norm resolvent sense. As expected, in the singular case one has $\mathcal{D}(H_{f_1,\dots,f_N})\neq\mathcal{D}(H_{\rm free})$. Furthermore, the result also holds for a form factor satisfying a constraint weaker than Eq.~\eqref{eq:minusone}, provided that a suitable renormalization of the energy of the excited state of the spin is performed.

Carrying forward this line of research, in this work we shall generalize this result to the case of an arbitrary number of spins (Theorem~\ref{thm}). Our construction, first hinted at in~\cite{lonigro2022renormalization}, is sufficiently general to accommodate spin--spin interactions, as long as they also preserve the total number of excitations of the system. This assumption will bring about a simple block-tridiagonal structure which will enable us to perform an explicit calculation of the resolvent. While the technique employed to obtain this result is essentially based on the same ideas at the root of the case $N=1$, its extension to the multi-spin case offers a number of additional mathematical intricacies that deserve a separate and thorough discussion.

Our results provide a contribution to the mathematical literature about the ultraviolet renormalization of similar models of matter--field interaction, to which extent a number of diverse techniques have been applied (see e.g.~\cite[Chapter 1]{lill2022time} for a recent overview). Typically, for sufficiently ``mild'' divergences (a typical example being the Fr\"ohlich polaron~\cite{frohlich1954electrons}), a quadratic form interpretation via the KLMN or the Friedrichs extension theorem is invoked; if not, cutoff renormalization methods (possibly with the aid of counterterms) are often applied, a paradigmatic example being the original treatment~\cite{nelson1964interaction} of divergences in the Nelson model. Some other techniques involve dressing transformations (possibly paired with cutoffs and/or counterterms), see e.g.~\cite{griesemer2016self,griesemer2018domain} where a partial characterization of the domain of the Fr\"ohlich polaron and the form domain of the Nelson model is obtained; or Feynman--Kac semigroups, which yield explicit expressions for the generated evolution~\cite{matte2017feynman,hinrichs2022feynman}; or, more recently, the use of extensions of the Fock space based on infinite tensor products (ITP) or extended state space (ESS) constructions~\cite{lill2020extended,lill2022implementing}. Most of these techniques have not been applied to spin--boson models, an exception being in~\cite{dam2020asymptotics}, where the existence of a renormalized spin--boson model \textit{not} having a rotating-wave interaction~\eqref{eq:model} and with an UV-divergent form factor not satisfying Eq.~\eqref{eq:minusone} is discussed---interestingly, in such a case the limiting Hamiltonian is necessarily trivial.

The approach applied in the present paper, while intrinsically model-dependent, is closer in spirit to two approaches based on the theory of self-adjoint extensions: the interior-boundary conditions (IBC) method~\cite{binz2021abstract} proposed in~\cite{teufel2016avoiding,teufel2021hamiltonians}, in which a Hamiltonian corresponding to a formal expression $H_{\rm free}+A+A^*$ is defined ab initio by means of abstract boundary conditions relating sectors with different number of particles; and the elegant abstract framework in~\cite{posilicano2020self} for self-adjoint operators associated with the formal expression above, formulated via singular perturbation theory~\cite{posilicano2008extensions}---that is, by first singularly perturbing $H_{\rm free}$, and then singularly perturbing the intermediate operator. The latter technique was applied to the Nelson model, yielding compatible results to the ones in~\cite{lampart2019nelson} (also cf.~\cite{lampart2018particle}) obtained via IBC. We will briefly revisit our results in the light of the latter ones at the end of the paper.

The paper is organized as follows. In Section~\ref{sec:regular} we will briefly describe the regular version of the model under investigation, and discuss its decomposition in a useful representation; in Section~\ref{sec:singular} we construct its generalization to the singular case.

\paragraph{Nomenclature.}\hspace{-.2cm}Throughout the paper, the scalar product on a Hilbert space $\mathcal{K}$ and its associated norm will be denoted via the following symbols:
\begin{equation}
	\Psi,\Phi\in\mathcal{K}\mapsto	\Braket{\Psi,\Phi}_{\mathcal{K}}\in\mathbb{C},\qquad\Phi\in\mathcal{K}\mapsto\left\|\Phi\right\|_{\mathcal{K}}\in\mathbb{R};
\end{equation}
the scalar product is chosen to be linear at the right, and antilinear at the left. The domain of an unbounded closed linear operator $T$ on $\mathcal{K}$ will be denoted by $\mathcal{D}(T)$. If $T$ admits an adjoint, it will be denoted by $T^*$. The complex conjugate of a number $z\in\mathbb{C}$ will be denoted by $\bar{z}$. Furthermore, given two Hilbert spaces $\mathcal{K},\mathcal{K}'$, the space of continuous (bounded) linear operators between $\mathcal{K}$ and $\mathcal{K}'$ will be denoted as $\mathcal{B}(\mathcal{K},\mathcal{K}')$; in the case $\mathcal{K}=\mathcal{K}'$, we will simply denote it as $\mathcal{B}(\mathcal{K})$.

\section{The model}\label{sec:regular}

We shall discuss the structure of the model~\eqref{eq:model} for \textit{regular} form factors $f_1,\dots,f_N\in\hilb$ by first describing its spin degrees of freedom (Section~\ref{subsec:spin}) and then introducing the coupling with the boson field (Section~\ref{subsec:enterboson}).

\subsection{Decomposition of the spin Hamiltonian}\label{subsec:spin}

The configuration space $\mathfrak{h}$ of $N$ two-level systems is the $2^N$-dimensional Hilbert space obtained, adopting a bra-ket notation, as
\begin{equation}
	\mathfrak{h}=\bigotimes_{j=1}^N\mathfrak{h}_j,\qquad\mathfrak{h}_j=\mathrm{Span}\{\ket{\e_j},\ket{\g_j}\}\simeq\mathbb{C}^2,
\end{equation}
with the normalized vectors $\ket{\e_j}$, $\ket{\g_j}$ respectively representing the excited and ground state of the $j$th spin. Let us introduce the following operator:
\begin{equation}\label{eq:nexc}
	N_{\rm exc}=\sum_{j=1}^N\left[\mathrm{I}_{\mathfrak{h}_1}\otimes\cdots\otimes\mathrm{I}_{\mathfrak{h}_{j-1}}\otimes\ket{\e_j}\!\!\bra{\e_j}\otimes\mathrm{I}_{\mathfrak{h}_{j+1}}\otimes\cdots\otimes\mathrm{I}_{\mathfrak{h}_{N}}\right]
\end{equation}
representing the number of spins in their excited state, whose spectrum comprises the eigenvalues $0,1,\dots,N$. Consequently, $\mathfrak{h}$ can be decomposed as the direct sum of its eigenspaces:\footnote{Notice that we shall use Greek indices $\nu,\nu',...$ to label the \textit{number} of excited spins, which ranges from $0$ to $N$, as opposed with the Latin indices $j,\ell,...$ which label the spins themselves and range from $1$ to $N$.}
\begin{equation}\label{eq:decomposition}
	\mathfrak{h}\simeq\bigoplus_{\nu=0}^N\mathfrak{h}^{(\nu)},\qquad \dim\mathfrak{h}^{(\nu)}=\binom{N}{\nu},
\end{equation}
with $N_{\rm exc}\mathfrak{h}^{(\nu)}=\nu\mathfrak{h}^{(\nu)}$. For example, $\mathfrak{h}^{(0)}$ is spanned by the only vector $\ket{g_1}\otimes\cdots\otimes\ket{g_N}$, $\mathfrak{h}^{(1)}$ is spanned by the $N$ vectors $\ket{g_1}\otimes\cdots\otimes\ket{\e_j}\otimes\cdots\ket{g_N}$, 
and so on. As such, the most general state $u\in\mathfrak{h}$ can be represented as
\begin{equation}\label{eq:representation}
	 u\simeq\left[\begin{array}{c}
		 u_0 \\\hline  u_1 \\\hline \vdots \\\hline  u_N
	\end{array}\right],\qquad  u_\nu\in\mathfrak{h}^{(\nu)},
\end{equation}
with $u_\nu$ representing a spin configuration in which $\nu$ spins are in their excited state, and $N-\nu$ are in their ground state.

Let $K\in\mathcal{B}(\mathfrak{h})$ be the operator representing the energy of the spins, possibly comprising spin--spin interactions. We shall make the following assumption on $K$:
\begin{hypothesis}\label{hyp1}
	$K\in\mathcal{B}(\mathfrak{h})$ is a nonnegative symmetric operator which preserves the number of excited spins: $[K,N_{\rm exc}]=0$, with $N_{\rm exc}$ as in Eq.~\eqref{eq:nexc}.
\end{hypothesis}
The nonnegativity of $K$, while not strictly required, will be assumed to ease the notation. $K$ is therefore block-diagonal with respect to the decomposition~\eqref{eq:decomposition} of $\mathfrak{h}$:
\begin{equation}\label{eq:blockdiagonal}
	K\simeq\left[\begin{array}{c|c|c|c}
		K_0 & & &\phantom{\ddots}\\\hline & K_1 & &\phantom{\ddots}\\\hline  & & K_2 &\phantom{\ddots}\\\hline &&&\ddots
	\end{array}\right],\qquad K_\nu\in\mathcal{B}\left(\mathfrak{h}^{(\nu)}\right).
\end{equation}

\begin{example}\label{ex:1}The simplest case of a spin Hamiltonian involving nontrivial $N_{\rm exc}$-preserving interactions is
\begin{equation}\label{eq:cappa}
K=\sum_{j=1}^N\left(e_j\,\sigma^+_j\sigma^-_j+g_j\,\sigma^-_j\sigma^+_j\right)+\sum_{j\neq\ell=1}^Nv_{j\ell}\,\sigma^+_j\sigma^-_\ell
\end{equation}
for some $e_j,g_j\in\mathbb{R}$, and $v_{j\ell}^*=v_{\ell j}\in\mathbb{C}$, and with the operators $\sigma^\pm_j$ defined by
\begin{eqnarray}\label{eq:sigmas1}
	\sigma^+_j&=&\mathrm{I}_{\mathfrak{h}_1}\otimes\cdots\otimes\mathrm{I}_{\mathfrak{h}_{j-1}}\otimes\ket{\e_j}\!\!\bra{\g_j}\otimes\mathrm{I}_{\mathfrak{h}_{j+1}}\otimes\cdots\otimes\mathrm{I}_{\mathfrak{h}_{N}},\\\label{eq:sigmas2}
	\sigma^-_j&=&\mathrm{I}_{\mathfrak{h}_1}\otimes\cdots\otimes\mathrm{I}_{\mathfrak{h}_{j-1}}\otimes\ket{\g_j}\!\!\bra{\e_j}\otimes\mathrm{I}_{\mathfrak{h}_{j+1}}\otimes\cdots\otimes\mathrm{I}_{\mathfrak{h}_{N}},
\end{eqnarray}
each representing a transition of the $j$th spin between its excited and ground state, and vice versa. The first term in Eq.~\eqref{eq:cappa} represents the free energy of the spins, with $e_j$ and $g_j$ respectively representing the excitation and ground energy of the $j$th spin; the second term encodes spin--spin interactions.
\end{example}

\subsection{Decomposition of the total Hamiltonian}\label{subsec:enterboson}

Let us now consider the Hilbert space $\hfrak=\mathfrak{h}\otimes\fock$, with $\hilb$ and $\fock(\hilb)\equiv\fock$ as defined in Section~\ref{sec:intro}. The decomposition~\eqref{eq:decomposition} of the spin subspace $\mathfrak{h}$ induces the following decomposition of $\hfrak$:
\begin{equation}\label{eq:decomp}
	\hfrak\simeq\bigoplus_{\nu=0}^N\hfrak^{(\nu)},\qquad\hfrak^{(\nu)}=\mathfrak{h}^{(\nu)}\otimes\fock,
\end{equation}
with $\hfrak^{(\nu)}$ being the space of all spin--field states such that $\nu$ out of $N$ spins are in their excited state. Any state in $\hfrak$ can thus be represented as
\begin{equation}\label{eq:representation2}
	\Psi\simeq\left[\begin{array}{c}
		\Psi_0 \\\hline \Psi_1 \\\hline \vdots \\\hline \Psi_N
	\end{array}\right],\qquad \Psi_\nu\in\hfrak^{(\nu)}.
\end{equation}
Given $f_1,f_2,\dots,f_N\in\hilb$, let us consider the (regular) multi-spin--boson model $H_{f_1,\dots,f_N}$ defined by Eq.~\eqref{eq:model}, which is a self-adjoint operator on $\hfrak$ with domain $\mathcal{D}(H_{f_1,\dots,f_N})=\mathcal{D}(H_{\rm free})=\mathfrak{h}\otimes\mathcal{D}(\dOmega)$. We shall require the following assumption about the single-particle operator $\omega$:
\begin{hypothesis}\label{hyp0}
	$\omega$ is a strictly positive self-adjoint operator on $\hilb$, with $\omega\geq m>0$.
\end{hypothesis}
We are thus focusing on the case of massive fields, in which case \textit{infrared} divergences are absent---only ultraviolet ones will be examined. The \textit{massless} case $m=0$, while of interest, would require a technically refined analysis which we will leave for future research: the construction of singular creation and annihilation operators developed in~\cite{lonigro2022generalized} that we shall employ in Section~\ref{sec:singular} relies on the fact that $\mathcal{D}(\dOmega)$ contains the domain of the number operator, which holds provided that $m>0$.

Therefore, $\dOmega$ is a nonnegative operator, and so is $H_{\rm free}$. We can write $H_{f_1,\dots,f_N}=H_{\rm free}+A+A^\dag$, with $H_{\rm free}$ as in Eq.~\eqref{eq:model0} and
\begin{equation}\label{eq:a}
	A=\sum_{j=1}^N\sigma_j^+\otimes\a{f_j},\qquad A^\dag=\sum_{j=1}^N\sigma_j^-\otimes\adag{f_j}.
\end{equation}
Recalling the decomposition~\eqref{eq:decomp}, the domain of $H_{f_1,\dots,f_N}$ can be similarly decomposed as
\begin{equation}
\mathcal{D}(H_{\rm free})\simeq\bigoplus_{\nu=0}^N\mathcal{D}^{(\nu)},\qquad \mathcal{D}^{(\nu)}=\mathfrak{h}^{(\nu)}\otimes\mathcal{D}(\dOmega)\subset\hfrak^{(\nu)},
\end{equation}
Since the spin Hamiltonian $K$ satisfies Hypothesis~\ref{hyp1}, for all $\nu=0,\dots,N$ we have $H_{\rm free}\mathcal{D}^{(\nu)}\subset\hfrak^{(\nu)}$, whereas, for all $\nu=0,\dots,N-1$, $A\mathcal{D}^{(\nu)}\subset\hfrak^{(\nu+1)}$ and $A^\dag\mathcal{D}^{(\nu+1)}\subset\hfrak^{(\nu)}$, thus bringing about a \textit{block-tridiagonal} decomposition for the Hamiltonian $H_{f_1,\dots,f_N}$:
\begin{equation}\label{eq:blocktrid}
	H_{f_1,\dots,f_N}\simeq\left[\begin{array}{c|c|c|c|c}
		H_0 & A_{0,1}^\dag& \phantom{\ddots}& \phantom{\ddots}&\\\hline 
		A_{0,1} & H_1 & A_{1,2}^\dag & \phantom{\ddots} &\\\hline  
		& A_{1,2} & H_2 & A_{2,3}^\dag &\phantom{\ddots} \\\hline
		& & A_{2,3} & H_3 & \ddots\\\hline 
		&&& \ddots& \ddots
	\end{array}\right],
\end{equation}
with the operators $H_\nu$, $A_{\nu,\nu+1}$ and $A_{\nu-1,\nu}^\dag$ respectively corresponding to the restrictions of $H_{\rm free}$, $A$ and $A^\dag$ to $\mathcal{D}^{(\nu)}$; in particular, $H_\nu=K_\nu\otimes\mathrm{I}_\fock+\mathrm{I}_{\mathfrak{h}^{(\nu)}}\otimes\dOmega$. We point out that the number of blocks scales linearly with $N$, while the dimension of $\mathfrak{h}$ scales exponentially with it.

\begin{example}[Case $N=2$]\label{ex:twoatom}
We shall briefly discuss how these decompositions come into play in the case of $N=2$ spins. The spin subspace $\mathfrak{h}$ decomposes as $\mathfrak{h}=\mathfrak{h}^{(0)}\oplus\mathfrak{h}^{(1)}\oplus\mathfrak{h}^{(2)}$, where
\begin{eqnarray}
\mathfrak{h}^{(0)}&=&\mathrm{Span}\,\{\ket{\g,\g}\};\\
\mathfrak{h}^{(1)}&=&\mathrm{Span}\,\{\ket{\e,\g},\ket{\g,\e}\};\\
\mathfrak{h}^{(2)}&=&\mathrm{Span}\,\{\ket{\e,\e}\},
\end{eqnarray}
where $\ket{\g,\g}:=\ket{\g_1}\otimes\ket{\g_2}$, $\ket{\e,\g}:=\ket{\e_1}\otimes\ket{\g_2}$, and so on. The Hamiltonian as in Eq.~\eqref{eq:cappa} reads
\begin{eqnarray}
	K&=&\sum_{j=1,2}\left(e_j\sigma_j^+\sigma_j^-+g_j\sigma_j^-\sigma_j^+\right)+v\sigma_1^+\sigma_2+v^*\sigma_2^+\sigma_1^-\nonumber\\
	&\simeq&\left[\begin{array}{c|cc|c}
		g_1+g_2&0 &0 &0\\\hline
		0&e_1+g_2&v&0\\
		0&v^*&g_1+e_2&0\\\hline
		0&0&0&e_1+e_2
	\end{array}\right],
\end{eqnarray}
with $e_j,g_j\in\mathbb{R}$ respectively corresponding to the energies of the excited and ground states of the $j$th spin, and $v\in\mathbb{C}$ modulating the spin--spin interaction; the distinction between sectors having the same number of excited spins has been stressed. 

Any vector in $\hfrak$ can be written as
\begin{equation}
	\Psi\simeq\left[\begin{array}{c}
		\Psi_{\rm gg} \\\hline \Psi_{\rm eg}\\ \Psi_{\rm ge} \\\hline \Psi_{\rm ee}
	\end{array}\right],\qquad\Psi_{\rm gg},\Psi_{\rm eg},\Psi_{\rm ge},\Psi_{\rm ee}\in\fock,
\end{equation}
with $\Psi_{xx'}$, $x,x'\in\{\rm e,g\}$, corresponding to the states of the boson field when the two spins are, respectively, in the $x$ and $x'$ state. Given two form factors $f_1,f_2\in\hilb$, the Hamiltonian $H_{f_1,f_2}$
acquires the following matrix structure:
\begin{equation}\label{eq:blocktrid_twoatom}
	H_{f_1,f_2}=\left[\begin{array}{c|cc|c}
		H_{\mathrm{gg}}&\adag{f_1}&\adag{f_2}& \\\hline
		\a{f_1}&H_{\mathrm{eg}}&v&\adag{f_2}\\
		\a{f_2}&v^*&H_{\mathrm{ge}}&\adag{f_1}\\\hline
		 &\a{f_2}&\a{f_1}&H_{\mathrm{ee}}
	\end{array}\right],
\end{equation}
where $H_{\rm xx'}=\dOmega+x_1+x'_2$, $x,x'\in\{e,g\}$. The structure~\eqref{eq:blocktrid} is thus recovered with
\begin{equation}
	H_0=H_{\rm gg},\qquad H_1=\left[\begin{array}{cc}
H_{\mathrm{eg}}&v\\
v^*&H_{\mathrm{ge}}
	\end{array}\right],\qquad H_2=H_{\rm ee}
\end{equation}
and
\begin{equation}\label{eq:a_twoatom}
A_{0,1}=\left[\begin{array}{c}
	\a{f_1}\\\a{f_2}
\end{array}\right],\qquad A_{1,2}=\left[\begin{array}{cc}
\a{f_2}&\a{f_1}
\end{array}\right].\qedhere
\end{equation}
\end{example}

\section{Resolvent and singular coupling}\label{sec:singular}

We shall now consider the case in which $f_1,\dots,f_N\notin\hilb$, but they satisfy the weaker condition~\eqref{eq:minusone}. In this case, the operators $A_{\nu,\nu+1},A_{\nu-1,\nu}^\dag$ as in Eq.~\eqref{eq:blocktrid} cease to be well-defined unbounded operators on $\hfrak^{(\nu)}$. We will follow the ideas developed in~\cite{lonigro2022generalized} to circumvent this problem.

\subsection{Some scales of Hilbert spaces}\label{subsec:scale}

Given a nonnegative self-adjoint operator $T$ on a Hilbert space $\mathcal{K}$, we will denote by \textit{$T$-scale} the nested family of Hilbert spaces $\{\mathcal{K}_s\}_{s\in\mathbb{R}}$, where $\mathcal{K}_s$ is the completion of $\bigcap_{n\in\mathbb{N}}\mathcal{D}(T^n)$ with respect to the norm
\begin{equation}
	\|\Phi\|_{\mathcal{K}_s}:=\left\|(T+1)^{s/2}\Phi\right\|_{\mathcal{K}}.
\end{equation}
Clearly, $\mathcal{K}_s\subset\mathcal{K}_{s'}$ whenever $s\geq s'$; in particular, $\mathcal{K}_0=\mathcal{K}$, and $\mathcal{K}_{+1}$ and $\mathcal{K}_{+2}$ coincide respectively with the form domain and operator domain of $T$. Notice that the ``$+1$'' in the definition above can be discarded whenever $T\geq\epsilon>0$. By construction, for every $r,s\in\mathbb{R}$, the operator $(T+1)^{r}$ can be continuously extended\footnote{
	With a slight abuse of notation, in such cases we will say that the operator can be ``interpreted'' either as an unbounded operator on $\mathcal{K}$, or as a bounded (continuous) operator between $\mathcal{K}_s$ and $\mathcal{K}_{s-2r}$.
} to an isometry between the spaces $\mathcal{K}_s$ and $\mathcal{K}_{s-2r}$. Consequently, for all $s>0$, the triple $\left(\mathcal{K}_{+s},\mathcal{K},\mathcal{K}_{-s}\right)$ is a Gel'fand triple~\cite{bohm1974rigged,de2005role,bohm1989dirac}, with the spaces $\mathcal{K}_{\pm s}$ being mutually dual with respect to the duality pairing
\begin{equation}\label{eq:pairing}
	(\Psi,\Phi)\in\mathcal{K}_{-s}\times\mathcal{K}_{+s}\mapsto\left(\Psi,\Phi\right)_{\mathcal{K}_{-s},\mathcal{K}_s}:=\Braket{(T+1)^{-s/2}\Psi,(T+1)^{+s/2}\Phi}_\mathcal{K},
\end{equation}
which is compatible with the scalar product on $\mathcal{K}$, that is, $(\Psi,\Phi)_{\mathcal{K}_{-s},\mathcal{K}_s}=\braket{\Psi,\Phi}_{\mathcal{K}}$ whenever $\Psi\in\mathcal{K}$; for this reason, with a slight abuse of notation we shall also use the latter symbol for the pairing, when the spaces $\mathcal{K}_{\pm s}$ are clear from the context. We refer to the literature, e.g.~\cite{albeverio2000singular,albeverio2007singularly,simon1995spectral}, for further details on the topic.

In our case, recalling that $\omega\geq m>0$ and $\dOmega\geq0$ by Hypothesis~\ref{hyp1}, we can define two scales of Hilbert spaces: the $\omega$-scale $\{\hilb_s\}_{s\in\mathbb{R}}$ associated with $\omega$, with
	\begin{equation}
		\|\psi\|^2_{\hilb_s}=\|\omega^{s/2}\psi\|^2_\hilb=\int\omega(k)^s|\psi(k)|^2\,\mathrm{d}\mu,
	\end{equation}
	and the $\dOmega$-scale $\{\fock_s\}_{s\in\mathbb{R}}$ associated with $\dOmega$, with
	\begin{equation}
		\|\Psi\|^2_{\fock_s}=\left\|\left(\dOmega+1\right)^{s/2}\Psi\right\|^2_\fock.
	\end{equation}
To ease the notation, we shall hereafter adopt the shorthand $\|\psi\|_{\hilb_s}\equiv\|\psi\|_s$ (and, in particular, $\|\psi\|_0\equiv\|\psi\|$) for the norms characterizing the $\omega$-scale. Clearly, the condition~\eqref{eq:minusone} is equivalent to $f_1,\dots,f_N\in\hilb_{-1}$.

For a measurable function $f\notin\hilb$, the annihilation operator $\a{f}$ still exists as an unbounded operator on $\fock$ but fails to have a densely defined adjoint. However, the following result holds:
\begin{proposition}[\cite{lonigro2022generalized}, Props. 3.4, 3.5 and 3.7]\label{prop:af}
	Let $\omega$ satisfy Hypothesis~\ref{hyp0}, and $f\in\hilb_{-s}$ for some $s\geq1$. Then the following statements are true:
	\begin{itemize}
		\item[(i)] the restriction of the annihilation operator to $\fock_s$ defines a continuous operator $\a{f}\in\mathcal{B}(\fock_{+s},\fock)$;
		\item[(ii)] its adjoint $\adag{f}:=\a{f}^*\in\mathcal{B}(\fock,\fock_{-s})$ with respect to the duality pairing between $\fock_s$ and $\fock_{-s}$ is a continuous operator whose action on $\Psi\in\fock_{+1}$ agrees with the creation operator;
		\item [(iii)] there exists a sequence $\{f^i\}_{i\in\mathbb{N}}\subset\hilb$ such that
		\begin{equation}
			\lim_{i\to\infty}\left\|a(f)-a(f^i)\right\|_{\mathcal{B}(\fock_s,\fock)}=0,\qquad 	\lim_{i\to\infty}\left\|a^\dag(f)-a^\dag(f^i)\right\|_{\mathcal{B}(\fock,\fock_{-s})}=0,
		\end{equation}
		and this happens if and only if $\|f-f^i\|_{-s}\to0$.
	\end{itemize}
\end{proposition}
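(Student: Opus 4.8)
The plan is to reduce all three statements to a single two-sided operator-norm bound for the annihilation operator, namely $c\,\|f\|_{-s}\le\|\a{f}\|_{\mathcal{B}(\fock_s,\fock)}\le\|f\|_{-s}$ for some constant $c=c(m,s)>0$, and then to obtain everything concerning $\adag{f}$ by duality across the Gel'fand triple $(\fock_{+s},\fock,\fock_{-s})$.

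First I would prove the upper bound, which is statement (i), working sector by sector on $\fock$. Inside the contraction defining $\a{f}$ I would split $f(k)=\omega(k)^{s/2}\bigl(\omega(k)^{-s/2}f(k)\bigr)$ and apply Cauchy--Schwarz, peeling off the factor $\|\omega^{-s/2}f\|^2=\|f\|_{-s}^2$ and leaving a weight $\omega(k)^s$ on the remaining variable of $\Psi^{(n+1)}$. Using the permutation symmetry of $\Psi^{(n+1)}$, the combinatorial prefactor $n+1$ converts this single weight into the symmetric sum $\sum_j\omega(k_j)^s$. The decisive step is then the elementary superadditivity inequality $\sum_j a_j^s\le(\sum_j a_j)^s$, valid for nonnegative $a_j$ precisely when $s\ge1$; this bounds the symmetric sum by $(\sum_j\omega(k_j))^s\le(\dOmega+1)^s$ acting on the sector, yielding $\|\a{f}\Psi\|_\fock\le\|f\|_{-s}\,\|\Psi\|_{\fock_s}$. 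This is exactly where the hypothesis $s\ge1$ enters, and it is the technical heart of the whole proposition.

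For (ii), since $\a{f}$ is now a bounded map $\fock_s\to\fock$, its adjoint with respect to the pairing $(\cdot,\cdot)_{\fock_{-s},\fock_s}$ is automatically a bounded operator $\adag{f}:=\a{f}^*\in\mathcal{B}(\fock,\fock_{-s})$ with the same norm. To identify it with the creation operator on $\fock_{+1}$, I would fix $\Psi\in\fock_{+1}$ and $\Phi$ in the core $\bigcap_n\mathcal{D}(\dOmega^n)$, write out $(\adag{f}\Psi,\Phi)_{\fock_{-s},\fock_s}=\Braket{\Psi,\a{f}\Phi}_\fock$, and match this against the explicit symmetrized creation formula, whose sector components lie in $\fock_{-s}$ exactly under $f\in\hilb_{-s}$, $\Psi\in\fock_{+1}$. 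Compatibility of the pairing with the inner product on $\fock$ together with density of the core then give the claimed agreement; this step is essentially bookkeeping, but it is the part most sensitive to getting the duality identifications right.

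Finally, for (iii) I would note that $\a{\cdot}$ is additive, so $\a{f}-\a{f^i}=\a{f-f^i}$ and, by the upper bound, $\|\a{f}-\a{f^i}\|_{\mathcal{B}(\fock_s,\fock)}\le\|f-f^i\|_{-s}$; taking adjoints (an isometry for the operator norms across the triple) gives the same estimate for $\adag{f}-\adag{f^i}$. Hence $\|f-f^i\|_{-s}\to0$ implies convergence of both families, and a concrete approximating sequence is furnished by the ultraviolet cutoffs $f^i=f\,\mathbf 1_{\{\omega\le i\}}$, which lie in $\hilb$ (since $\omega\ge m>0$) and satisfy $\|f-f^i\|_{-s}\to0$ by dominated convergence, reflecting the density of $\hilb$ in $\hilb_{-s}$. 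For the converse implication I would prove the matching lower bound by testing $\a{g}$ on one-particle states: for $\psi\in\hilb_s$ the vacuum component of $\a{g}\psi$ equals $\Braket{g,\psi}_\hilb=\Braket{\omega^{-s/2}g,\omega^{s/2}\psi}_\hilb$, and choosing $\psi=\omega^{-s}g$ recovers $\|g\|_{-s}$ up to the harmless discrepancy between $\omega$ and $\omega+1$ controlled by $\omega\ge m$. This yields $\|\a{g}\|_{\mathcal{B}(\fock_s,\fock)}\ge c\,\|g\|_{-s}$, so operator-norm convergence forces $\|f-f^i\|_{-s}\to0$, closing the equivalence. I expect the lower bound and the duality identification in (ii) to be the only places demanding genuine care; everything else is driven by the single estimate from (i).
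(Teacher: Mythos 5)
Your proposal is correct, but note that this paper never proves Proposition~\ref{prop:af} at all: it is imported verbatim from~\cite{lonigro2022generalized} (Props.~3.4, 3.5 and 3.7), so the only meaningful comparison is with the standard argument of that reference, which is what you reproduce. Your route---sector-wise Cauchy--Schwarz with the splitting $f=\omega^{s/2}(\omega^{-s/2}f)$ plus the superadditivity inequality $\sum_j a_j^s\le\bigl(\sum_j a_j\bigr)^s$ for $s\ge1$ in (i), the Banach-space adjoint across the Gel'fand triple with a core-density identification in (ii), and, for (iii), additivity $a(f)-a(f^i)=a(f-f^i)$, ultraviolet cutoffs $f\,\mathbf{1}_{\{\omega\le i\}}$, and the one-particle lower bound $\|\a{g}\|_{\mathcal{B}(\fock_s,\fock)}\ge(1+1/m)^{-s/2}\|g\|_{-s}$ (with $\omega\ge m$ absorbing the discrepancy between $\omega$ and $\omega+1$)---is precisely the standard treatment, and all steps check out.
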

The case of interest for us is $s=1$. The fact that, for $f\in\hilb$, the action of these ``singular'' operators agrees with that of the ``standard'' ones justifies the slight abuse of notation in using the same symbols $\a{f}$, $\adag{f}$. In these cases, $\a{f}$ and $\adag{f}$ can be interpreted either as unbounded operators on $\fock$ or as continuous operators respectively between $\fock_{+1}$ and $\fock$, and between $\fock$ and $\fock_{-1}$.

For our purposes, let us define a new scale of spaces $\{\hfrak_s\}_{s\in\mathbb{R}}$ by $\hfrak_{s}=\mathfrak{h}\otimes\fock_{s}$, with $\{\fock_{s}\}_{s\in\mathbb{R}}$ being the $\dOmega$-scale. These spaces can be decomposed analogously as in Eq.~\eqref{eq:decomposition}:
\begin{equation}
\hfrak_{s}=\bigoplus_{\nu=0}^N\hfrak^{(\nu)}_{s},	\qquad\hfrak^{(\nu)}_{s}=\mathfrak{h}^{(\nu)}\otimes\fock_{s}.
\end{equation}
Following a similar reasoning as in~\cite[Lemma 4.1]{lonigro2022generalized}, one readily shows that $\{\hfrak_s\}_{s\in\mathbb{R}}$, endowed with the topology inherited from $\{\fock_s\}_{s\in\mathbb{R}}$, coincides with the $H_{\rm free}$-scale of Hilbert spaces, since $H_{\rm free}$ differs from $\mathrm{I}_{\mathfrak{h}}\otimes\dOmega$ only by a bounded operator; analogously, for all $\nu=0,\dots,N$, $\{\hfrak_s^{(\nu)}\}_{s\in\mathbb{R}}$ coincides with the $H_\nu$-scale. Therefore, as a direct consequence of Prop.~\ref{prop:af}, we are provided with two families of continuous operators ($\nu=0,\dots,N-1$)
\begin{equation}\label{eq:annu}
A_{\nu,\nu+1}:\hfrak^{(\nu)}_{+1}\rightarrow\hfrak^{(\nu+1)},\qquad A_{\nu,\nu+1}^\dag:\hfrak^{(\nu+1)}\rightarrow\hfrak^{(\nu)}_{-1},
\end{equation}
which are also mutually adjoint with respect to the duality pairing between the two spaces.

With this interpretation, the block-tridiagonal operator in Eq.~\eqref{eq:blocktrid}, while no longer directly defining a self-adjoint operator on $\hfrak$, \textit{does} define a continuous operator between $\hfrak_{+1}$ and $\hfrak_{-1}$. Our goal will be to find a suitable domain $\mathcal{D}(H_{f_1,\dots,f_N})\subset\hfrak$ such that the restriction of said operator to $\mathcal{D}(H_{f_1,\dots,f_N})$ defines an unbounded operator on $\hfrak$ having all anticipated properties.

\subsection{A family of concatenated propagators}

The following lemma is a generalization of~\cite[Lemma 5.3]{lonigro2022generalized} to the multi-spin case.

\begin{lemma}\label{lemma:concatenated}
	Let $\omega$ be a single-particle operator satisfying Hypothesis~\ref{hyp0}, $K$ a spin Hamiltonian satisfying Hypothesis~\ref{hyp1}, $f_1,\dots,f_N\in\hilb_{-1}$, and for all $\nu=0,\dots,N-1$ let $H_\nu$, $A_{\nu,\nu+1}$ and $A_{\nu,\nu+1}^\dag$ as defined in Eq.~\eqref{eq:blocktrid}. 
	For all $z\in\mathbb{C}\setminus\mathbb{R}$, let $\mathcal{S}_\nu(z)$, $\mathcal{G}_\nu(z)$ be defined via the recurrence relations
	\begin{eqnarray}
		\mathcal{S}_0(z)=0,&&\\\label{eq:gnuz}
		\mathcal{G}_\nu(z)=H_\nu-z-\mathcal{S}_\nu(z),&\quad&\nu=0,\dots,N,\\
		\mathcal{S}_\nu(z)=A_{\nu-1,\nu}\,\mathcal{G}_{\nu-1}^{-1}(z)A_{\nu-1,\nu}^\dag,&\quad&\nu=1,\dots,N.\label{eq:snuz}
	\end{eqnarray}
	Then, for all $\nu$, 
	\begin{itemize}
		\item[(i)] $\mathcal{S}_\nu(z)$ is a bounded operator on $\hfrak^{(\nu)}$ which satisfies $\mathcal{S}_\nu(z)^*=\mathcal{S}_\nu(\bar{z})$ and, for all $\Psi\in\hfrak^{(\nu)}$,
			\begin{equation}\label{eq:herglotz}
				\frac{\Im\Braket{\Psi,\mathcal{S}_\nu(z)\Psi}_{\hfrak^{(\nu)}}}{\Im z}\geq0;
			\end{equation}
		\item[(ii)] $\mathcal{G}_\nu(z)$ is an unbounded operator on $\hfrak^{(\nu)}$, with domain $\mathcal{D}^{(\nu)}$, which satisfies $\mathcal{G}_\nu(z)^*=\mathcal{G}_\nu(\bar{z})$ and continuously extends to a bounded operator between $\hfrak^{(\nu)}_{+1}$ and $\hfrak^{(\nu)}_{-1}$ satisfying, for all\footnote{
			Here and in the remainder of the proposition, all quantities $\Braket{\Psi,\mathcal{G}_\nu(z)\Psi}_{\hfrak^{(\nu)}}$ must be interpreted, whenever $\Psi\in\hfrak^{(\nu)}_{+1}\setminus\mathcal{D}(H_{\nu})$, in the sense of the duality pairing between $\hfrak^{(\nu)}_{+1}$ and $\hfrak^{(\nu)}_{-1}$ with $\mathcal{G}_\nu(z)$ interpreted as a continuous operator between $\hfrak^{(\nu)}_{+1}$ and $\hfrak^{(\nu)}_{-1}$ (see the discussion at the start of Section~\ref{subsec:scale}).
		} 
		$\Psi\in\hfrak^{(\nu)}_{+1}$,
			\begin{equation}\label{eq:gnuim}
				\frac{\Im\Braket{\Psi,\mathcal{G}_\nu(z)\Psi}_{\hfrak^{(\nu)}}}{\Im z}\leq-\|\Psi\|^2_{\hfrak^{(\nu)}}\leq0;
			\end{equation}
			besides, it admits a bounded inverse $\mathcal{G}_\nu^{-1}(z)\in\mathcal{B}(\hfrak^{(\nu)})$, with operator norm
			\begin{equation}\label{eq:invgnorm}
				\left\|\mathcal{G}_\nu^{-1}(z)\right\|_{\mathcal{B}(\hfrak^{(\nu)})}\leq\frac{1}{|\!\Im z|},
			\end{equation}
			which continuously extends to a bounded operator between $\hfrak^{(\nu)}_{-1}$ and $\hfrak^{(\nu)}_{+1}$.
	\end{itemize}
\end{lemma}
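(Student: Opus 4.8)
The plan is to prove both items simultaneously by induction on $\nu$, reading off the recursive structure of the definitions: at each level I will first establish (i) for $\mathcal{S}_\nu(z)$ assuming (ii) already holds for $\mathcal{G}_{\nu-1}(z)$, and then deduce (ii) for $\mathcal{G}_\nu(z)$ from the freshly-proved (i) at the same level. The base case $\nu=0$ is immediate: $\mathcal{S}_0(z)=0$ satisfies (i) vacuously, while $\mathcal{G}_0(z)=H_0-z$ is a standard shifted self-adjoint operator and will be handled by the same argument I use for the general inductive step of (ii) below (with $\mathcal{S}_0=0$).

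For the inductive step of (i), I would read off the mapping properties of Eq.~\eqref{eq:annu} together with the inductive hypothesis that $\mathcal{G}_{\nu-1}^{-1}(z)$ extends to a bounded operator $\hfrak^{(\nu-1)}_{-1}\to\hfrak^{(\nu-1)}_{+1}$. Then $\mathcal{S}_\nu(z)=A_{\nu-1,\nu}\,\mathcal{G}_{\nu-1}^{-1}(z)\,A_{\nu-1,\nu}^\dag$ factors as $\hfrak^{(\nu)}\xrightarrow{A^\dag}\hfrak^{(\nu-1)}_{-1}\xrightarrow{\mathcal{G}^{-1}}\hfrak^{(\nu-1)}_{+1}\xrightarrow{A}\hfrak^{(\nu)}$ and is therefore bounded on $\hfrak^{(\nu)}$. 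The identity $\mathcal{S}_\nu(z)^*=\mathcal{S}_\nu(\bar z)$ follows by taking adjoints, using $A_{\nu-1,\nu}^*=A_{\nu-1,\nu}^\dag$ and the inductive relation $\mathcal{G}_{\nu-1}^{-1}(z)^*=\mathcal{G}_{\nu-1}^{-1}(\bar z)$. For the Herglotz inequality~\eqref{eq:herglotz} I would set $\Phi=A_{\nu-1,\nu}^\dag\Psi$ and $\xi=\mathcal{G}_{\nu-1}^{-1}(z)\Phi$, move $A_{\nu-1,\nu}$ across the pairing via the mutual adjointness in Eq.~\eqref{eq:annu}, and use the Hermitian symmetry of the Gel'fand-triple pairing together with $\Phi=\mathcal{G}_{\nu-1}(z)\xi$ to write $\Braket{\Psi,\mathcal{S}_\nu(z)\Psi}=\overline{\Braket{\xi,\mathcal{G}_{\nu-1}(z)\xi}}$; the sign flip under conjugation then converts the inductive bound~\eqref{eq:gnuim} at level $\nu-1$ into the nonnegativity~\eqref{eq:herglotz} at level $\nu$.

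For the inductive step of (ii), since $\mathcal{S}_\nu(z)$ is bounded and $H_\nu$ is self-adjoint on $\mathcal{D}^{(\nu)}$, the operator $\mathcal{G}_\nu(z)=H_\nu-z-\mathcal{S}_\nu(z)$ retains domain $\mathcal{D}^{(\nu)}$ and satisfies $\mathcal{G}_\nu(z)^*=\mathcal{G}_\nu(\bar z)$, and it extends to a bounded map $\hfrak^{(\nu)}_{+1}\to\hfrak^{(\nu)}_{-1}$ since each summand does (here I use that the $\hfrak^{(\nu)}_s$ scale coincides with the $H_\nu$-scale). Because $\Braket{\Psi,H_\nu\Psi}$ is real, the imaginary part of $\Braket{\Psi,\mathcal{G}_\nu(z)\Psi}$ equals $-\Im(z)\|\Psi\|^2-\Im\Braket{\Psi,\mathcal{S}_\nu(z)\Psi}$; dividing by $\Im z$ and invoking (i) gives~\eqref{eq:gnuim}. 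This bound yields $\|\mathcal{G}_\nu(z)\Psi\|\geq|\Im z|\,\|\Psi\|$ on $\mathcal{D}^{(\nu)}$, hence injectivity and closed range, while applying the same estimate to $\mathcal{G}_\nu(\bar z)=\mathcal{G}_\nu(z)^*$ gives density of the range; together these produce the bounded inverse with norm bound~\eqref{eq:invgnorm}.

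The delicate point, and the one I expect to be the main obstacle, is upgrading this to a bounded inverse between $\hfrak^{(\nu)}_{-1}$ and $\hfrak^{(\nu)}_{+1}$, i.e. showing that $\mathcal{G}_\nu(z)$ is an isomorphism on the scale and not merely on $\hfrak^{(\nu)}$. For this I would prove a coercive lower bound of the form $\|\Psi\|_{\hfrak^{(\nu)}_{+1}}\leq C(z)\,\|\mathcal{G}_\nu(z)\Psi\|_{\hfrak^{(\nu)}_{-1}}$: starting from $\|\Psi\|^2_{\hfrak^{(\nu)}_{+1}}=\Braket{\Psi,(H_\nu+1)\Psi}=\Re\Braket{\Psi,\mathcal{G}_\nu(z)\Psi}+\Re(z+1)\|\Psi\|^2+\Re\Braket{\Psi,\mathcal{S}_\nu(z)\Psi}$, I would bound the first term by the duality pairing $\|\Psi\|_{\hfrak^{(\nu)}_{+1}}\|\mathcal{G}_\nu(z)\Psi\|_{\hfrak^{(\nu)}_{-1}}$, control the remaining $\|\Psi\|^2$ terms by $\tfrac{1}{|\Im z|}\|\Psi\|_{\hfrak^{(\nu)}_{+1}}\|\mathcal{G}_\nu(z)\Psi\|_{\hfrak^{(\nu)}_{-1}}$ via the imaginary-part bound just obtained, and divide through by $\|\Psi\|_{\hfrak^{(\nu)}_{+1}}$. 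Closed range and injectivity of $\mathcal{G}_\nu(z):\hfrak^{(\nu)}_{+1}\to\hfrak^{(\nu)}_{-1}$ follow, and density of the range (hence surjectivity) follows from the same estimate applied to the adjoint $\mathcal{G}_\nu(\bar z)$ through the Gel'fand-triple pairing; the inverse is then bounded $\hfrak^{(\nu)}_{-1}\to\hfrak^{(\nu)}_{+1}$ and restricts consistently to the $\hfrak^{(\nu)}$-inverse on the dense subset $\mathcal{D}^{(\nu)}$. Throughout, the part most prone to error is the bookkeeping of which pairing between $\hfrak^{(\nu)}_{\pm1}$ is in play and the correct adjoint and complex-conjugation conventions, so I would state these explicitly before running the estimates.
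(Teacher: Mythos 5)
Your proposal is correct, and its skeleton coincides with the paper's: induction on $\nu$, deriving (i) at level $\nu$ from (ii) at level $\nu-1$ (boundedness of $\mathcal{S}_\nu(z)$ via the factorization through $\hfrak^{(\nu-1)}_{\mp1}$, then the Herglotz bound), and then (ii) at level $\nu$ from the freshly proved (i) (bounded-perturbation argument, the bound~\eqref{eq:gnuim}, and the numerical-range estimate giving~\eqref{eq:invgnorm}). Two steps differ in execution. In the Herglotz step, the paper uses $\mathcal{S}_\nu(z)^*=\mathcal{S}_\nu(\bar z)$ together with the identity $\mathcal{G}_{\nu-1}^{-1}(z)-\mathcal{G}_{\nu-1}^{-1}(\bar z)=\mathcal{G}_{\nu-1}^{-1}(z)\{\mathcal{G}_{\nu-1}(\bar z)-\mathcal{G}_{\nu-1}(z)\}\mathcal{G}_{\nu-1}^{-1}(\bar z)$, landing on the test vector $\mathcal{G}_{\nu-1}^{-1}(\bar z)A^\dag_{\nu-1,\nu}\Psi$, whereas you move $A_{\nu-1,\nu}$ across the duality pairing and invoke its Hermitian symmetry, landing on $\mathcal{G}_{\nu-1}^{-1}(z)A^\dag_{\nu-1,\nu}\Psi$; these are the same reduction of~\eqref{eq:herglotz} to~\eqref{eq:gnuim} at level $\nu-1$, yours being marginally more direct. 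The genuine divergence is the final step, namely that $\mathcal{G}_\nu^{-1}(z)$ extends to a bounded map $\hfrak^{(\nu)}_{-1}\to\hfrak^{(\nu)}_{+1}$. The paper reduces this to boundedness on $\hfrak^{(\nu)}$ of $(H_\nu+1)^{1/2}\mathcal{G}_\nu^{-1}(z)(H_\nu+1)^{1/2}$, verified via the second resolvent identity, the functional calculus for $H_\nu$, and the facts that $\mathcal{S}_\nu(z)$ and $\mathcal{G}_\nu^{-1}(z)$ are bounded with the latter having range $\mathcal{D}^{(\nu)}$; you instead prove the coercive (inf-sup) estimate $\|\Psi\|_{\hfrak^{(\nu)}_{+1}}\leq C(z)\|\mathcal{G}_\nu(z)\Psi\|_{\hfrak^{(\nu)}_{-1}}$ directly for the extended operator and run a Lax--Milgram-type argument: injectivity and closed range from the estimate at $z$, dense range from the estimate at $\bar z$ via the duality adjoint. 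Both are sound. Yours buys an explicit bound on $\|\mathcal{G}_\nu^{-1}(z)\|_{\mathcal{B}(\hfrak^{(\nu)}_{-1},\hfrak^{(\nu)}_{+1})}$, of the form $C(z)=1+\bigl(|z+1|+\|\mathcal{S}_\nu(z)\|_{\mathcal{B}(\hfrak^{(\nu)})}\bigr)/|\!\Im z|$, and constructs the scale inverse without recycling the $\hfrak^{(\nu)}$-inverse; the paper's route is shorter once that inverse is in hand. When writing yours up, make explicit (a) that the identity $\|\Psi\|^2_{\hfrak^{(\nu)}_{+1}}=\Braket{\Psi,(H_\nu+1)\Psi}_{\hfrak^{(\nu)}}$ is exact only for the $H_\nu$-scale norm, which is equivalent to (not equal to) the norm inherited from the $\dOmega$-scale, and (b) that the duality adjoint of $\mathcal{G}_\nu(z):\hfrak^{(\nu)}_{+1}\to\hfrak^{(\nu)}_{-1}$ is indeed $\mathcal{G}_\nu(\bar z)$, checked by continuity from the symmetric relation on the dense set $\mathcal{D}^{(\nu)}$---precisely the bookkeeping points you flagged.
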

\begin{proof}
	First of all, notice that the operators $H_\nu$ and $(H_\nu-z)^{-1}$ extend to continuous operators respectively from  $\hfrak^{(\nu)}_{+1}$ to $\hfrak^{(\nu)}_{-1}$, and vice versa. Indeed, by the definition of the norms on the $H_\nu$-scale (see Section~\ref{subsec:scale}), this happens if and only if the operators
	\begin{equation}
		(H_\nu+1)^{-1/2}H_\nu(H_\nu+1)^{-1/2},\qquad 	(H_\nu+1)^{1/2}(H_\nu-z)^{-1}(H_\nu+1)^{1/2}
	\end{equation}
can be extended to bounded operators on $\hfrak^{(\nu)}$, which is clearly true. Since $\mathcal{G}_0(z)=H_0-z$, the claim holds for $\nu=0$. We shall prove it for $\nu\geq1$ by induction.
	
	Suppose that the claim is true for $\nu-1$. Then $\mathcal{S}_\nu(z)$ is a bounded operator on $\hfrak^{(\nu)}$ since, taking into account Eq.~\eqref{eq:annu}, we have
	\begin{equation}
		\mathcal{S}_\nu(z):\quad\hfrak^{(\nu)}
		\overset{A_{\nu,\nu+1}^\dag}{\xrightarrow{\makebox[1cm]{}}}
		\hfrak^{(\nu-1)}_{-1}
		\overset{\mathcal{G}_{\nu-1}^{-1}(z)}{\xrightarrow{\makebox[1.5cm]{}}}
		\hfrak^{(\nu-1)}_{+1}
		\overset{A_{\nu,\nu+1}}{\xrightarrow{\makebox[1cm]{}}}
		\hfrak^{(\nu)};
	\end{equation}
	the property $\mathcal{S}_\nu(z)^*=\mathcal{S}_\nu(\bar{z})$ also follows directly from the fact that $\mathcal{G}_{\nu-1}(z)^*=\mathcal{G}_{\nu-1}(\bar{z})$, and from the fact that $A_{\nu-1,\nu}$ and $A_{\nu-1,\nu}^\dag$ are mutually adjoint. Besides, given $\Psi\in\hfrak^{(\nu)}$, we have
	\begin{eqnarray}\label{eq:similar}
		\Im\Braket{\Psi,\mathcal{S}_\nu(z)\Psi}_{\hfrak^{(\nu)}}&=&\Im\Braket{\Psi,A_{\nu-1,\nu}\,\mathcal{G}_{\nu-1}^{-1}(z)A_{\nu-1,\nu}^\dag\Psi}_{\hfrak^{(\nu)}}\nonumber\\
		&=&\frac{1}{2\ii}\Braket{
			\Psi,A_{\nu-1,\nu}\Big\{\mathcal{G}_{\nu-1}^{-1}(z)-\mathcal{G}_{\nu-1}^{-1}(\bar{z})\Big\}A_{\nu-1,\nu}^\dag\Psi
		}_{\hfrak^{(\nu)}}\nonumber\\
		&=&\frac{1}{2\ii}\Braket{
		\Psi,A_{\nu-1,\nu}\,\mathcal{G}_{\nu-1}^{-1}(z)\Big\{\mathcal{G}_{\nu-1}(\bar{z})-\mathcal{G}_{\nu-1}(z)\Big\}\mathcal{G}_{\nu-1}^{-1}(\bar{z})A_{\nu-1,\nu}^\dag\Psi
		}_{\hfrak^{(\nu)}}\nonumber\\
		&=&-\Im\Braket{\mathcal{G}_{\nu-1}^{-1}(\bar{z})A_{\nu-1,\nu}^\dag\Psi,\mathcal{G}_{\nu-1}(z)\,\mathcal{G}_{\nu-1}^{-1}(\bar{z})A_{\nu-1,\nu}^\dag\Psi}_{\hfrak^{(\nu)}}.
	\end{eqnarray}
	Noticing that $\mathcal{G}_{\nu-1}^{-1}(\bar{z})A_{\nu-1,\nu}^\dag\Psi\in\hfrak^{(\nu-1)}_{+1}$, by Eq.~\eqref{eq:gnuim} we have $\Im\Braket{\Psi,\mathcal{S}_\nu(z)\Psi}_{\hfrak^{(\nu)}}\geq0$ whenever $\Im z>0$ and vice versa, completing the proof of (i).
	
	We must now prove that $\mathcal{G}_\nu(z)$, as defined in Eq.~\eqref{eq:gnuz}, has the desired properties. Clearly, being a bounded perturbation of $H_\nu-z$, the operator is well-defined on the dense domain $\mathcal{D}^{(\nu)}$ and its adjoint is $\mathcal{G}_\nu(z)^*=\mathcal{G}_\nu(\bar{z})$; for the same reason, since $H_\nu$ can be extended to a continuous operator from $\hfrak^{(\nu)}_{+1}$ to $\hfrak^{(\nu)}_{-1}$, the same holds for $\mathcal{G}_\nu(z)$. The second property follows directly from the definition~\eqref{eq:gnuz} of $\mathcal{G}_\nu(z)$:
	\begin{equation}
		\Im\Braket{\Psi,\mathcal{G}_\nu(z)\Psi}_{\hfrak^{(\nu)}}=-\Im z\,\|\Psi\|^2_{\hfrak^{(\nu)}}-\Im\Braket{\Psi,\mathcal{S}_\nu(z)\Psi}_{\hfrak^{(\nu)}},
	\end{equation}
	which, by (i), implies
	\begin{equation}\label{eq:using}
		\frac{\Im\Braket{\Psi,\mathcal{G}_\nu(z)\Psi}_{\hfrak^{(\nu)}}}{\Im z}=-\|\Psi\|^2_{\hfrak^{(\nu)}}-\frac{\Im\Braket{\Psi,\mathcal{S}_\nu(z)\Psi}_{\hfrak^{(\nu)}}}{\Im z}\leq-\|\Psi\|^2_{\hfrak^{(\nu)}}.
	\end{equation}
	To prove that $\mathcal{G}_\nu(z)$ has a bounded inverse, notice that
	\begin{equation}
		\left|\Braket{\Psi,\mathcal{G}_\nu(z)\Psi}_{\hfrak^{(\nu)}}\right|\geq\left|\Im\Braket{\Psi,\mathcal{G}_\nu(z)\Psi}_{\hfrak^{(\nu)}}\right|\geq|\!\Im z|\|\Psi\|^2_{\hfrak^{(\nu)}},
	\end{equation}
	whence, applying the Cauchy--Schwarz inequality and recalling that $\mathcal{G}_\nu(z)^*=\mathcal{G}_\nu(\bar{z})$,
	\begin{equation}
		\left\|\mathcal{G}_\nu(z)\Psi\right\|_{\hfrak^{(\nu)}}\geq|\!\Im z|\|\Psi\|_{\hfrak^{(\nu)}},\qquad \left\|\mathcal{G}_\nu(z)^*\Psi\right\|_{\hfrak^{(\nu)}}\geq|\!\Im z|\|\Psi\|_{\hfrak^{(\nu)}};
	\end{equation}
	these conditions imply that $\mathcal{G}_\nu(z)$ has a bounded inverse in $\hfrak^{(\nu)}$ whose operator norm satisfies Eq.~\eqref{eq:invgnorm}, see e.g.~\cite[Theorem 3.3.2]{derezinski2013unbounded}.
	
	Finally, again by the definition of the norms of the $H_\nu$-scale, $\mathcal{G}_\nu^{-1}(z)$ extends to a continuous operator from $\hfrak^{(\nu)}_{-1}$ to $\hfrak^{(\nu)}_{+1}$ if and only if the operator on $\hfrak^{(\nu)}$, with domain $\hfrak^{(\nu)}_{+1}$, acting as
	\begin{equation}
		(H_\nu+1)^{1/2}\mathcal{G}_\nu^{-1}(z)(H_\nu+1)^{1/2}
	\end{equation}
	extends to a bounded operator in $\hfrak^{(\nu)}$. Now, using Eq.~\eqref{eq:gnuz} of $\mathcal{G}_\nu(z)$ and applying the second resolvent identity,
	\begin{eqnarray}
		(H_\nu+1)^{1/2}\mathcal{G}_\nu^{-1}(z)(H_\nu+1)^{1/2}&=&	(H_\nu+1)^{1/2}(H_\nu-z)^{-1}(H_\nu+1)^{1/2}\nonumber\\&&-(H_\nu+1)^{1/2}\mathcal{G}_\nu^{-1}(z)\,\mathcal{S}_\nu(z)(H_\nu-z)^{-1}(H_\nu+1)^{1/2}.
	\end{eqnarray}
	The first term clearly extends to a bounded operator on $\hfrak^{(\nu)}$; likewise, the operator $(H_\nu-z)^{-1}(H_\nu+1)^{1/2}$ extends to a bounded operator on $\hfrak^{(\nu)}$, whence the claim follows from the boundedness of $\mathcal{S}_\nu(z)$ and $\mathcal{G}_\nu^{-1}(z)$ and the fact that the latter has values in $\mathcal{D}^{(\nu)}\subset\hfrak^{(\nu)}$. 
\end{proof}
Notice that both functions $z\in\mathbb{C}\setminus\mathbb{R}\mapsto\mathcal{S}_\nu(z),\mathcal{G}_\nu^{-1}(z)\in\mathcal{B}(\hilb^{(\nu)})$ are operator-valued Nevanlinna (or Herglotz) functions~\cite{gesztesy2000matrix,gesztesy2001some}.

\begin{example}[Case $N=2$]
	We shall take a look at the explicit structure of the propagators in the case $N=2$, first discussed in~Example~\ref{ex:twoatom}, in which $\nu=0,1,2$. For simplicity, let us consider the case in which spin--spin interactions are absent---that is, $v=0$ in Eq.~\eqref{eq:blocktrid_twoatom}. For $\nu=0$, we simply have $\mathcal{S}_0(z)=0$ and $\mathcal{G}_0(z)=H_{\rm gg}-z$; for $\nu=1$,
	\begin{eqnarray}\label{eq:n2s1}
		\mathcal{S}_1(z)&=&\left[
		\begin{array}{c}
			\a{f_1}\\\a{f_2}
			\end{array}
			\right]\mathcal{G}_0^{-1}(z)\left[\begin{array}{cc}
		\adag{f_1}&\adag{f_2}
			\end{array}\right]\nonumber\\
		&=&\renewcommand{\arraystretch}{1.5}\left[\begin{array}{cc}
		\a{f_1}\frac{1}{H_{\rm gg}-z}\adag{f_1}&\a{f_1}\frac{1}{H_{\rm gg}-z}\adag{f_2}\\
		\a{f_2}\frac{1}{H_{\rm gg}-z}\adag{f_1}&\a{f_2}\frac{1}{H_{\rm gg}-z}\adag{f_2}
		\end{array}\right],
	\end{eqnarray}
	\begin{eqnarray}\label{eq:n2g1}
		\mathcal{G}_1(z)&=&H_1-z-\mathcal{S}_1(z)\nonumber\\&=&\renewcommand{\arraystretch}{1.5}\left[\begin{array}{cc}
		H_{\rm ge}-z-\a{f_1}\frac{1}{H_{\rm ee}-z}\adag{f_1}&-\a{f_1}\frac{1}{H_{\rm ee}-z}\adag{f_2}\\
		-\a{f_2}\frac{1}{H_{\rm ee}-z}\adag{f_1}&H_{\rm eg}-z-\a{f_2}\frac{1}{H_{\rm ee}-z}\adag{f_2}
		\end{array}\right].
	\end{eqnarray}
	Note that the diagonal terms of both $\mathcal{S}_1(z)$ and $\mathcal{G}_1(z)$ respectively correspond to the values of the self-energy and the propagator in the case $N=1$~\cite{lonigro2022generalized}.	Finally, for $\nu=2$,
	\begin{eqnarray}
		\mathcal{S}_2(z)&=&\left[
		\begin{array}{cc}\a{f_2}&\a{f_1}\end{array}
		\right]\mathcal{G}_1^{-1}(z)\left[\begin{array}{c}
			\adag{f_2}\\\adag{f_1}
		\end{array}\right],\\
	\mathcal{G}_2(z)&=&H_{\rm ee}-z-\left[
	\begin{array}{cc}\a{f_2}&\a{f_1}\end{array}
		\right]\mathcal{G}_1^{-1}(z)\left[\begin{array}{c}
		\adag{f_2}\\\adag{f_1}
		\end{array}\right].
	\end{eqnarray}
	Clearly, the dependence of both operators $\mathcal{S}_2(z)$, $\mathcal{G}_2(z)$ on the inverse of $\mathcal{G}_1(z)$ makes their exact expression already involved for $N=2$---even in the absence of spin--spin interactions. Still, these expressions may be used as the starting point for perturbative expansions potentially useful for spectral estimates.
	\end{example}

\subsection{The singular multi-spin--boson model}\label{subsec:results}

In order to present the main result of the work, we need to introduce yet some other families of $z$-dependent operators.

\begin{lemma}\label{lemma:gammas}
	Given $z\in\mathbb{C}\setminus\mathbb{R}$, the operators ($\nu'>\nu$)
\begin{eqnarray}\label{eq:gammanu3}
	\gamma_{\nu,\nu'}(z):\hfrak^{(\nu)}_{-1}\rightarrow\hfrak^{(\nu')},&\qquad&\gamma_{\nu,\nu'}(z)=\gamma_{\nu'-1,\nu'}(z)\cdots\gamma_{\nu+1,\nu+2}(z)\,\gamma_{\nu,\nu+1}(z),\\\label{eq:gammanu4}
	\gamma^\dag_{\nu,\nu'}(z):\hfrak^{(\nu')}\rightarrow\hfrak^{(\nu)}_{+1},&\qquad&\gamma^\dag_{\nu,\nu'}(z)=\gamma^\dag_{\nu,\nu+1}(z)\,\gamma^\dag_{\nu+1,\nu+2}(z)\cdots\gamma^\dag_{\nu'-1,\nu'}(z),
\end{eqnarray}
where
\begin{eqnarray}\label{eq:gammanu}
	\gamma_{\nu,\nu+1}(z):\hfrak^{(\nu)}_{-1}\rightarrow\hfrak^{(\nu+1)},&\qquad&\gamma_{\nu,\nu+1}(z):=-A_{\nu,\nu+1}\mathcal{G}^{-1}_{\nu}(z),\\\label{eq:gammanu2}
	\gamma^\dag_{\nu,\nu+1}(z):\hfrak^{(\nu+1)}\rightarrow\hfrak^{(\nu)}_{+1},&\qquad&\gamma^\dag_{\nu,\nu+1}(z):=-\mathcal{G}^{-1}_{\nu}(z)A^\dag_{\nu,\nu+1},
\end{eqnarray}
are well-defined; besides, if $f_1,\dots,f_N\in\hilb$, then, for all $\Phi\in\mathcal{D}^{(\nu')}$, $\gamma^\dag_{\nu,\nu'}(z)\Phi\in\mathcal{D}^{(\nu)}$.
\end{lemma}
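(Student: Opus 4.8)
The plan is to treat the two families of operators separately, first settling well-definedness by chaining the mapping properties recorded in Lemma~\ref{lemma:concatenated} and Eq.~\eqref{eq:annu}, and then establishing the regularity statement by an inductive argument resting on a single-step improvement available in the regular case. Throughout I would use the Gel'fand-triple inclusions $\hfrak^{(\mu)}_{+1}\subset\hfrak^{(\mu)}\subset\hfrak^{(\mu)}_{-1}$, valid since a larger scale index yields a smaller space.

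First I would check that the elementary blocks are well-defined. For $\gamma_{\nu,\nu+1}(z)=-A_{\nu,\nu+1}\mathcal{G}^{-1}_\nu(z)$, the composition $\hfrak^{(\nu)}_{-1}\xrightarrow{\mathcal{G}^{-1}_\nu(z)}\hfrak^{(\nu)}_{+1}\xrightarrow{A_{\nu,\nu+1}}\hfrak^{(\nu+1)}$ makes sense by Lemma~\ref{lemma:concatenated}(ii) (which gives $\mathcal{G}^{-1}_\nu(z):\hfrak^{(\nu)}_{-1}\to\hfrak^{(\nu)}_{+1}$) and Eq.~\eqref{eq:annu}; dually, $\gamma^\dag_{\nu,\nu+1}(z)=-\mathcal{G}^{-1}_\nu(z)A^\dag_{\nu,\nu+1}$ factors as $\hfrak^{(\nu+1)}\xrightarrow{A^\dag_{\nu,\nu+1}}\hfrak^{(\nu)}_{-1}\xrightarrow{\mathcal{G}^{-1}_\nu(z)}\hfrak^{(\nu)}_{+1}$. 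For the composite operators I would then chain these blocks: the range $\hfrak^{(\mu+1)}$ of $\gamma_{\mu,\mu+1}(z)$ sits inside the domain $\hfrak^{(\mu+1)}_{-1}$ of $\gamma_{\mu+1,\mu+2}(z)$, so the product in Eq.~\eqref{eq:gammanu3} is a legitimate composition landing in $\hfrak^{(\nu')}$; dually, the range $\hfrak^{(\mu)}_{+1}$ of $\gamma^\dag_{\mu,\mu+1}(z)$ lies in the domain $\hfrak^{(\mu)}$ of the next factor $\gamma^\dag_{\mu-1,\mu}(z)$, so the product in Eq.~\eqref{eq:gammanu4} lands in $\hfrak^{(\nu)}_{+1}$.

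For the regularity claim I would isolate the single-step statement: if $f_1,\dots,f_N\in\hilb$, then $\gamma^\dag_{\nu,\nu+1}(z)$ maps $\mathcal{D}^{(\nu+1)}$ into $\mathcal{D}^{(\nu)}$. The crucial point is that, while the singular creation operator merely maps $\fock$ into $\fock_{-1}$, a genuinely normalizable form factor $f\in\hilb$ yields the improved mapping $\adag{f}:\fock_{+1}\to\fock$: indeed, by the energy bound $\|\a{f}\Psi\|_\fock\leq\|f\|_{-1}\,\|\dOmega^{1/2}\Psi\|_\fock$ together with the standard identity $\|\adag{f}\Psi\|^2_\fock=\|\a{f}\Psi\|^2_\fock+\|f\|^2\|\Psi\|^2_\fock$ valid on $\fock_{+1}=\mathcal{D}(\dOmega^{1/2})$, one has $\adag{f}\Psi\in\fock$ whenever $\Psi\in\fock_{+1}$ (here the finiteness $\|f\|<\infty$ is essential). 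In particular $\adag{f}$ sends $\mathcal{D}(\dOmega)=\fock_{+2}\subset\fock_{+1}$ into $\fock$, so $A^\dag_{\nu,\nu+1}\Phi\in\hfrak^{(\nu)}$ for $\Phi\in\mathcal{D}^{(\nu+1)}$. Since the range of $\mathcal{G}^{-1}_\nu(z)\in\mathcal{B}(\hfrak^{(\nu)})$ is exactly the operator domain $\mathcal{D}^{(\nu)}$ of $\mathcal{G}_\nu(z)$ by Lemma~\ref{lemma:concatenated}(ii), I conclude $\gamma^\dag_{\nu,\nu+1}(z)\Phi=-\mathcal{G}^{-1}_\nu(z)A^\dag_{\nu,\nu+1}\Phi\in\mathcal{D}^{(\nu)}$.

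The composite claim then follows by reading Eq.~\eqref{eq:gammanu4} from right to left and iterating: starting from $\Phi\in\mathcal{D}^{(\nu')}$, each factor $\gamma^\dag_{\mu-1,\mu}(z)$ sends $\mathcal{D}^{(\mu)}$ into $\mathcal{D}^{(\mu-1)}$, so after $\nu'-\nu$ applications the vector lands in $\mathcal{D}^{(\nu)}$; what makes the induction close is precisely that each intermediate vector sits in the operator domain, which is exactly the regularity needed to feed it into the next factor. I expect the main obstacle to be this single-step regularity improvement, namely verifying that in the regular case the creation operator genuinely returns vectors in the base space $\fock$ (so that the ensuing $\mathcal{G}^{-1}_\nu(z)$ delivers the operator domain $\mathcal{D}^{(\nu)}$ and not merely $\hfrak^{(\nu)}_{+1}$); this is where the mass gap $m>0$ and the normalizability $\|f\|<\infty$ enter essentially. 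By contrast, the well-definedness of the composites is routine bookkeeping with the scale inclusions.
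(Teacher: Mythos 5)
Your proof is correct and takes essentially the same route as the paper's: well-definedness follows by chaining the mapping properties $\mathcal{G}^{-1}_\nu(z):\hfrak^{(\nu)}_{-1}\to\hfrak^{(\nu)}_{+1}$ (Lemma~\ref{lemma:concatenated}) and Eq.~\eqref{eq:annu} and iterating through the scale inclusions, while the regularity claim follows because, for $f_1,\dots,f_N\in\hilb$, the creation operators act in the standard sense on $\mathcal{D}^{(\nu+1)}\subset\hfrak^{(\nu+1)}_{+1}$ and land in $\hfrak^{(\nu)}$, so that $\mathcal{G}^{-1}_\nu(z)$ delivers its range $\mathcal{D}^{(\nu)}$, and one iterates. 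Your explicit justification of the single-step improvement via the bound $\|\a{f}\Psi\|_\fock\leq\|f\|_{-1}\|\dOmega^{1/2}\Psi\|_\fock$ and the identity $\|\adag{f}\Psi\|^2_\fock=\|\a{f}\Psi\|^2_\fock+\|f\|^2\|\Psi\|^2_\fock$ merely spells out what the paper leaves implicit in the phrase ``interpreted in the standard sense.''
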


\begin{proof}
The first statement is a simple consequence of the fact that, by Lemma~\ref{lemma:concatenated}, $\mathcal{G}^{-1}_\nu(z):\hfrak^{(\nu)}_{-1}\rightarrow\hfrak^{(\nu)}_{+1}$. Indeed,
\begin{eqnarray}
	\gamma_{\nu,\nu+1}(z):&\quad&
	\hfrak^{(\nu)}_{-1}
	\overset{\mathcal{G}^{-1}_\nu(z)}{\xrightarrow{\makebox[1cm]{}}}
	\hfrak^{(\nu)}_{+1}
	\overset{A_{\nu,\nu+1}}{\xrightarrow{\makebox[1cm]{}}}
	\hfrak^{(\nu+1)};\\
		\gamma^\dag_{\nu,\nu+1}(z):&\quad&
	\hfrak^{(\nu+1)}
	\overset{A^\dag_{\nu,\nu+1}}{\xrightarrow{\makebox[1cm]{}}}
	\hfrak^{(\nu)}_{-1}
	\overset{\mathcal{G}^{-1}_\nu(z)}{\xrightarrow{\makebox[1cm]{}}}
	\hfrak^{(\nu)}_{+1}.
\end{eqnarray}
Iterating this argument, the first claim is readily proven. The second one follows from the fact that, if $\Phi\in\mathcal{D}^{(\nu+1)}\subset\hfrak^{(\nu+1)}_{+1}$ and all form factors are regular, then the creation operators in the definition of $A^\dag_{\nu,\nu+1}$ can be interpreted in the standard sense, whence $A^\dag_{\nu,\nu+1}\Phi\in\hfrak^{(\nu)}$; thus $\mathcal{G}^{-1}_\nu(z)$ can, as well, interpreted as a bounded operator on $\hfrak^{(\nu)}$ with range equal to the domain of $\mathcal{G}_\nu(z)$, which is $\mathcal{D}^{(\nu)}$; hence $\gamma_{\nu,\nu+1}^\dag(z)$ maps $\mathcal{D}^{(\nu+1)}$ in $\mathcal{D}^{(\nu)}$. Applying an analogous reasoning to $\gamma_{\nu,\nu'}^\dag(z)$ completes the proof.
\end{proof}

We can thus define
\begin{eqnarray}
	\mathcal{L}^{-1}(z)&=&\left[\begin{array}{c|c|c|c|c}
		\mathrm{I} & \phantom{\ddots} & \phantom{\ddots} & \phantom{\ddots}&\\\hline 
		\gamma_{0,1}(z) & \mathrm{I}  & \phantom{\ddots} & \phantom{\ddots} &\\\hline  
		\gamma_{0,2}(z) & \gamma_{1,2}(z) & \mathrm{I} & \phantom{\gamma_{1,2}^\dag(z)} &\phantom{\ddots} \\\hline
		\gamma_{0,3}(z)	& \gamma_{1,3}(z) & \gamma_{2,3}(z) & \mathrm{I} & \phantom{\ddots\ddots} \\\hline 
		\vdots & \vdots & \vdots& \vdots& \ddots
	\end{array}\right],\label{eq:l(z)inv}\\\nonumber\\
	\mathcal{U}^{-1}(z)&=&\left[\begin{array}{c|c|c|c|c}
		\mathrm{I} & \gamma_{0,1}^\dag(z)& \gamma_{0,2}^\dag(z) & \gamma_{0,3}^\dag(z) & \cdots\\\hline 
		\phantom{\gamma_{0,1}(z)} & \mathrm{I} & \gamma_{1,2}^\dag(z) & \gamma_{1,3}^\dag(z) & \cdots\\\hline  
		& \phantom{\ddots} & \mathrm{I} & \gamma_{2,3}^\dag(z) &\cdots \\\hline
		& & \phantom{\ddots} & \mathrm{I} & \cdots\\\hline 
		&&& & \phantom{..}\ddots\phantom{..}
	\end{array}\right],\label{eq:u(z)inv}
\end{eqnarray}
with $\mathcal{L}^{-1}(z):\hfrak_{-1}\rightarrow\hfrak_{-1}$ and $\mathcal{U}^{-1}(z):\hfrak_{+1}\rightarrow\hfrak_{+1}$; notice that, to ease the notation, all subscripts on the identity operators in Eqs.~\eqref{eq:l(z)inv}--\eqref{eq:u(z)inv} are left understood.

We can finally provide the desired generalization of~\cite[Theorem 5.5]{lonigro2022generalized} to the multi-spin case:
\begin{theorem}\label{thm}
		Let $\omega$ be a single-particle operator satisfying Hypothesis~\ref{hyp0}, $K$ a spin Hamiltonian satisfying Hypothesis~\ref{hyp1}, $f_1,\dots,f_N\in\hilb_{-1}$
		, and for all $\nu=0,\dots,N$ let $H_\nu$, $A_{\nu,\nu+1}$ and $A_{\nu,\nu+1}^\dag$ as defined in Eq.~\eqref{eq:blocktrid}.
		Let $H_{f_1,\dots,f_N}$ be the operator on $\hfrak$ acting as in Eq.~\eqref{eq:blocktrid} on the domain		
		\begin{equation}\label{eq:domain}
			\mathcal{D}\!\left(H_{f_1,\dots,f_N}\right)=\left\{\Phi=\mathcal{U}^{-1}(z_0)\tilde{\Phi}:\;\tilde{\Phi}\in\mathcal{D}(H_{\rm free}),\;z_0\in\mathbb{C}\setminus\mathbb{R}\right\}\subset\hfrak_{+1},
		\end{equation}
	for some fixed $z_0\in\mathbb{C}\setminus\mathbb{R}$, with $\mathcal{U}^{-1}(z)$ as in Eq.~\eqref{eq:u(z)inv}. Then the following facts hold:
		\begin{itemize}
			\item [(i)] for all $f_1,\dots,f_N\in\hilb_{-1}$, $H_{f_1,\dots,f_N}$ is a self-adjoint operator on $\hfrak$, with resolvent
			\begin{equation}
				\left(H_{f_1,\dots,f_N}-z\right)^{-1}=\mathcal{U}^{-1}(z)\mathcal{G}^{-1}(z)\mathcal{L}^{-1}(z),
			\end{equation}
		with $\mathcal{L}^{-1}(z)$, $\mathcal{U}^{-1}(z)$ as in Eqs.~\eqref{eq:l(z)inv}--\eqref{eq:u(z)inv}, and $\mathcal{G}^{-1}(z)=\mathrm{diag}\left\{\mathcal{G}^{-1}_0(z),\dots,\mathcal{G}^{-1}_N(z)\right\}$;
			\item [(ii)] if $f_1,\dots,f_N\in\hilb$, $H_{f_1,\dots,f_N}$ coincides with the (regular) multi-spin--boson model with domain $\mathcal{D}(H_{\rm free})$;
			\item [(iii)] given $f_1,\dots,f_N\in\hilb_{-1}\setminus\hilb$, there exist sequences $\{f^i_1\}_{i\in\mathbb{N}},\ldots,\{f^i_N\}_{i\in\mathbb{N}}\subset\hilb$ such that
			\begin{equation}
				H_{f^i_1,\dots,f^i_N}\to H_{f_1,\dots,f_N}\;\;(i\to\infty)\qquad\text{in the norm resolvent sense,}
			\end{equation}
		and this happens if and only if $\|f^i_j-f_j\|_{-1}\to0$ for all $j=1,\dots,N$.
		\end{itemize}
\end{theorem}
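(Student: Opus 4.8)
The plan is to recognize the recurrence relations \eqref{eq:gnuz}--\eqref{eq:snuz} as the block Gaussian-elimination (Schur-complement) scheme associated with the block-tridiagonal operator \eqref{eq:blocktrid}, and to exploit the resulting block LDU factorization. Concretely, I would first verify, \emph{at the level of continuous operators between $\hfrak_{+1}$ and $\hfrak_{-1}$}, the identity
\begin{equation}
  H_{f_1,\dots,f_N}-z=\mathcal{L}(z)\,\mathcal{G}(z)\,\mathcal{U}(z),
\end{equation}
where $\mathcal{G}(z)=\mathrm{diag}\{\mathcal{G}_0(z),\dots,\mathcal{G}_N(z)\}$ and $\mathcal{L}(z)$, $\mathcal{U}(z)$ are the unipotent \emph{bidiagonal} operators whose only nonzero off-diagonal entries are $-\gamma_{\nu,\nu+1}(z)=A_{\nu,\nu+1}\mathcal{G}_\nu^{-1}(z)$ (subdiagonal of $\mathcal{L}$) and $-\gamma^\dag_{\nu,\nu+1}(z)=\mathcal{G}_\nu^{-1}(z)A^\dag_{\nu,\nu+1}$ (superdiagonal of $\mathcal{U}$). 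That $\mathcal{L}(z)$ and $\mathcal{U}(z)$ are precisely the inverses of the full triangular operators $\mathcal{L}^{-1}(z)$, $\mathcal{U}^{-1}(z)$ in \eqref{eq:l(z)inv}--\eqref{eq:u(z)inv} is a purely algebraic consequence of the concatenation property $\gamma_{\nu,\nu'}(z)=\gamma_{\nu'-1,\nu'}(z)\cdots\gamma_{\nu,\nu+1}(z)$. The factorization itself then follows by a direct block multiplication: computing $\mathcal{G}(z)\mathcal{U}(z)$ yields the bidiagonal operator with diagonal $\mathcal{G}_\nu(z)$ and superdiagonal $A^\dag_{\nu,\nu+1}$, and left-multiplying by $\mathcal{L}(z)$ reproduces the diagonal blocks $\mathcal{G}_\nu(z)+A_{\nu-1,\nu}\mathcal{G}_{\nu-1}^{-1}(z)A^\dag_{\nu-1,\nu}=\mathcal{G}_\nu(z)+\mathcal{S}_\nu(z)=H_\nu-z$ by \eqref{eq:gnuz}--\eqref{eq:snuz}, together with the correct sub- and superdiagonal couplings $A_{\nu,\nu+1}$, $A^\dag_{\nu,\nu+1}$, all higher off-diagonals vanishing. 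Every operation here is legitimate as a composition of the continuous maps supplied by Lemmas \ref{lemma:concatenated} and \ref{lemma:gammas}.

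Inverting the factorization gives $R(z):=\mathcal{U}^{-1}(z)\mathcal{G}^{-1}(z)\mathcal{L}^{-1}(z)$ as a two-sided inverse of $H_{f_1,\dots,f_N}-z$ viewed as a map $\hfrak_{+1}\to\hfrak_{-1}$. To obtain statement (i) I would restrict to $\hfrak$ and track mapping properties: $\mathcal{L}^{-1}(z)$ maps $\hfrak$ into $\hfrak$ (each $\gamma_{\nu,\nu'}(z)$ lands in $\hfrak^{(\nu')}$ with no loss of scale), $\mathcal{G}^{-1}(z)$ maps $\hfrak$ into $\mathcal{D}(H_{\rm free})=\bigoplus_\nu\mathcal{D}^{(\nu)}$, and $\mathcal{U}^{-1}(z)$ maps $\mathcal{D}(H_{\rm free})$ into $\hfrak_{+1}$ with image exactly the set \eqref{eq:domain}. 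Hence $R(z)\in\mathcal{B}(\hfrak)$ has range $\mathcal{D}(H_{f_1,\dots,f_N})$ and is injective (being a product of injective factors). The resolvent identity $R(z)-R(w)=(z-w)R(z)R(w)$ follows from the factorization by writing $(H-z)R(w)=\mathrm{I}+(w-z)R(w)$ on $\hfrak_{-1}$ and applying $R(z)(H-z)=\mathrm{I}$ on $\hfrak_{+1}$; thus $\{R(z)\}$ is a pseudoresolvent, and injectivity makes it the resolvent of a unique closed operator whose ($z$-independent) domain is $\mathrm{Ran}\,R(z)=\mathcal{D}(H_{f_1,\dots,f_N})$, justifying a posteriori the fixed choice of $z_0$ in \eqref{eq:domain}. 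Self-adjointness is then immediate from $R(z)^*=R(\bar z)$, which reduces to $\mathcal{G}_\nu(z)^*=\mathcal{G}_\nu(\bar z)$ (Lemma \ref{lemma:concatenated}) and the adjunction $\gamma_{\nu,\nu'}(z)^*=\gamma^\dag_{\nu,\nu'}(\bar z)$, giving $\mathcal{L}^{-1}(z)^*=\mathcal{U}^{-1}(\bar z)$.

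For statement (ii) I would invoke Lemma \ref{lemma:gammas}: when $f_1,\dots,f_N\in\hilb$ each $\gamma^\dag_{\nu,\nu'}(z)$ maps $\mathcal{D}^{(\nu')}$ into $\mathcal{D}^{(\nu)}$, so $\mathcal{U}^{-1}(z_0)$ (and its inverse) preserves $\mathcal{D}(H_{\rm free})$; hence $\mathcal{D}(H_{f_1,\dots,f_N})=\mathcal{D}(H_{\rm free})$, the factorization holds at the level of genuine operators, and $R(z)$ coincides with the resolvent of the Kato--Rellich operator of \eqref{eq:model}. For statement (iii), the $f_j$-dependence of $R(z)$ is entirely channelled through the coupling blocks $A_{\nu,\nu+1}$; by Prop.~\ref{prop:af}(iii), $\|f^i_j-f_j\|_{-1}\to0$ is equivalent to norm convergence $a(f^i_j)\to a(f_j)$ in $\mathcal{B}(\fock_{+1},\fock)$ and $a^\dag(f^i_j)\to a^\dag(f_j)$ in $\mathcal{B}(\fock,\fock_{-1})$. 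The forward direction then follows because $R(z)$ is assembled from finitely many products, sums and inversions of these blocks: each $\mathcal{S}_\nu$, $\mathcal{G}_\nu$, $\gamma_{\nu,\nu'}$ depends continuously (in the relevant operator norms) on the $A_{\nu,\nu+1}$, the inversions being controlled uniformly by the bound $\|\mathcal{G}_\nu^{-1}(z)\|\le|\Im z|^{-1}$ of Lemma \ref{lemma:concatenated}. The converse---recovering $\|f^i_j-f_j\|_{-1}\to0$ from norm resolvent convergence---is the step I expect to be the main obstacle: here I would isolate the off-diagonal block of $R(z)$ linking sectors $\nu$ and $\nu+1$, whose leading contribution is $\pm(H_\nu-z)^{-1}A^\dag_{\nu,\nu+1}(H_{\nu+1}-z)^{-1}$, test it against suitable vectors to extract the individual creation operators $a^\dag(f^i_j)$, and then apply the equivalence in Prop.~\ref{prop:af}(iii). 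Care must be taken throughout to keep the scale-space interpretations consistent, and to verify that the extraction genuinely separates the contributions of the different form factors $f_1,\dots,f_N$.
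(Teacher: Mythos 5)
Your proposal follows essentially the same route as the paper's proof: the same block LU/Schur-complement factorization $H_{f_1,\dots,f_N}-z=\mathcal{L}(z)\mathcal{G}(z)\mathcal{U}(z)$ interpreted as continuous operators between $\hfrak_{+1}$ and $\hfrak_{-1}$, inverted using Lemma~\ref{lemma:concatenated} and Lemma~\ref{lemma:gammas}, with the same mapping-property bookkeeping ($\mathcal{L}^{\pm1}(z)\hfrak\subset\hfrak$, $\mathcal{G}^{-1}(z)\hfrak=\mathcal{D}(H_{\rm free})$) to pass from the scale of spaces back to $\hfrak$, and the same norm-continuity argument for part (iii). The only differences are in the wrap-up and are harmless: you conclude self-adjointness from the pseudoresolvent identity, injectivity of $R(z)$, and $R(z)^*=R(\bar z)$ (which also yields the $z_0$-independence of \eqref{eq:domain} for free), whereas the paper instead shows directly that $H_{f_1,\dots,f_N}-z$ maps the domain \eqref{eq:domain} onto $\hfrak$ and handles $z_0$-independence separately in Remark~\ref{rem:independent}; and you verify the factorization by explicit block multiplication where the paper cites the standard LU theory for block-tridiagonal matrices. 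One remark on the step you flag as the main obstacle: the ``only if'' direction of (iii) is not proved in the paper either---its proof only establishes that $\|f^i_j-f_j\|_{-1}\to0$ implies norm resolvent convergence (via $\mathcal{S}^i_\nu(z)\to\mathcal{S}_\nu(z)$, then $[\mathcal{G}^i_\nu(z)]^{-1}\to[\mathcal{G}_\nu(z)]^{-1}$ using the bound \eqref{eq:invgnorm}, then $\gamma^i_{\nu,\nu'}(z)\to\gamma_{\nu,\nu'}(z)$)---so your admittedly incomplete sketch of the converse leaves you exactly at parity with the paper on that point.
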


\begin{remark}\label{rem:independent}
	The domain~\eqref{eq:domain} (and thus the corresponding operator) is in fact independent of the choice of $z_0$. Indeed, take $\Phi=\mathcal{U}^{-1}(z_0)\tilde{\Phi}$; then, for any other $z\in\mathbb{C}\setminus\mathbb{R}$,
\begin{equation}
	\Phi=\mathcal{U}^{-1}(z)\mathcal{U}(z)\mathcal{U}^{-1}(z_0)\tilde{\Phi};
\end{equation}
but since, by Lemma~\ref{lemma:gammas}, all operators $\gamma^\dag_{\nu,\nu'}(z)$ map $\mathcal{D}^{(\nu')}$ into $\mathcal{D}^{(\nu)}$, then both $\mathcal{U}(z)$ and $\mathcal{U}^{-1}(z_0)$ map $\mathcal{D}(H_{\rm free})$ into itself; whence $\Phi=\mathcal{U}^{-1}(z)\tilde{\Phi}'$ with $\tilde{\Phi}'\in\mathcal{D}(H_{\rm free})$.
\end{remark}

\begin{proof}[Proof of Theorem~\ref{thm}]
Given $f_1,\dots,f_N\in\hilb_{-1}$, the operator $H_{f_1,\dots,f_N}$ in Eq.~\eqref{eq:blocktrid} can be interpreted as a continuous operator between the spaces $\hfrak_{+1}$ and $\hfrak_{-1}$. We shall solve the equation
\begin{equation}\label{eq:extended}
	\left(H_{f_1,\dots,f_N}-z\right)\Phi=\Psi,\qquad \Phi\in\hfrak_{+1},\;\Psi\in\hfrak_{-1},
\end{equation}
for $z\in\mathbb{C}\setminus\mathbb{R}$. This requires computing the inverse of an operator-valued block-tridiagonal matrix, which will be done by exploiting a convenient version of the LU factorization of block-tridiagonal matrices (see e.g.~\cite{meurant1992review}). We have
	\begin{equation}\label{eq:lu}
			H_{f_1,\dots,f_N}-z=\mathcal{L}(z)\mathcal{G}(z)\mathcal{U}(z),
		\end{equation}
where $\mathcal{G}(z)=\mathrm{diag}\left\{\mathcal{G}_0(z),\mathcal{G}_1(z),\ldots\right\}$ and
\begin{eqnarray}
	\mathcal{L}(z)&=&\left[\begin{array}{c|c|c|c|c}
		\mathrm{I} 		 & \phantom{-\gamma_{0,1}(z)} & \phantom{-\gamma_{0,1}(z)} & \phantom{-\gamma_{0,1}(z)} & \\\hline 
		-\gamma_{0,1}(z) & \mathrm{I}                 & \phantom{\ddots}           & \phantom{\ddots}           &\\\hline  
		                 & -\gamma_{1,2}(z)           & \mathrm{I}                 & \phantom{-\ddots\ddots}    &\phantom{\ddots} \\\hline
		                 &                            & -\gamma_{2,3}(z)           & \mathrm{I}                 & \phantom{\ddots\ddots} \\\hline 
		                 &                            &                            & \ddots                     & \ddots
	\end{array}\right],\\\nonumber\\
	\mathcal{U}(z)&=&\left[\begin{array}{c|c|c|c|c}
		\mathrm{I}                 & -\gamma_{0,1}^\dag(z)   & \phantom{-\gamma_{0,1}(z)} & \phantom{-\gamma_{0,1}(z)} & \phantom{\ddots\ddots}\\\hline 
		\phantom{-\gamma_{0,1}(z)} & \mathrm{I}              & -\gamma_{1,2}^\dag(z)      & \phantom{\ddots\ddots}     &\\\hline  
		                           & \phantom{\ddots}        & \mathrm{I}                 & -\gamma_{2,3}^\dag(z)      &\phantom{\ddots} \\\hline
		                           &                         & \phantom{\ddots}           & \mathrm{I}                 & \ddots\\\hline 
		                           &                         &                            &                            & \ddots
	\end{array}\right],\label{eq:uz}
\end{eqnarray}
with $\mathcal{U}(z):\hfrak_{+1}\rightarrow\hfrak_{+1}$ and $\mathcal{L}(z):\hfrak_{-1}\rightarrow\hfrak_{-1}$ being the operator-valued block-triangular matrices appearing in the LU decomposition, and the operators $\gamma_{\nu,\nu+1}(z)$, $\gamma^\dag_{\nu,\nu+1}(z)$ as given by Eqs.~\eqref{eq:gammanu}--\eqref{eq:gammanu2}. Notice that, in particular,
\begin{equation}\label{eq:notice}
\mathcal{L}(z)\hfrak\subset\hfrak,\qquad \mathcal{L}^{-1}(z)\hfrak\subset\hfrak.
\end{equation}
Because of their triangular structure, the inverses of $\mathcal{L}(z)$ and $\mathcal{U}(z)$ exist and are given by Eqs.~\eqref{eq:l(z)inv}--\eqref{eq:u(z)inv}, whence the operator
\begin{equation}\label{eq:resolvent}
	\mathcal{U}^{-1}(z)\mathcal{G}^{-1}(z)\mathcal{L}^{-1}(z):\hfrak_{-1}\rightarrow\hfrak_{+1},
\end{equation}
with $\mathcal{L}^{-1}(z)$ and $\mathcal{U}^{-1}(z)$ as in Eqs.~\eqref{eq:l(z)inv}--\eqref{eq:u(z)inv}, corresponds to the inverse of $H_{f_1,\dots,f_N}-z$ interpreted as a continuous operator between $\hfrak_{+1}$ and $\hfrak_{-1}$.

We claim that the restriction of $H_{f_1,\dots,f_N}$ to the domain~\eqref{eq:domain} is a self-adjoint operator in $\hfrak$ whose resolvent is given by the restriction of the operator~\eqref{eq:resolvent} to $\hfrak$. Given $\Phi\in\mathcal{D}(H_{f_1,\dots,f_N})$, using Eq.~\eqref{eq:lu} we have
\begin{equation}
	\left(H_{f_1,\dots,f_N}-z_0\right)\Phi=\mathcal{L}(z_0)\mathcal{G}(z_0)\tilde{\Phi},
\end{equation}
and, since $\tilde{\Phi}\in\mathcal{D}(H_{\rm free})$, $\mathcal{G}(z_0)\tilde{\Phi}\in\hfrak$ whence, by Eq.~\eqref{eq:notice}, $\mathcal{L}(z_0)\mathcal{G}(z_0)\tilde{\Phi}\in\hfrak$ as well; therefore, we have $H_{f_1,\dots,f_N}\Phi\in\hfrak$, proving that the operator is indeed well-defined. To prove self-adjointness, we will show that Eq.~\eqref{eq:extended} for $\Psi\in\hfrak$ admits a unique solution in $\mathcal{D}(H_{f_1,\dots,f_N})$. Since $H_{f_1,\dots,f_N}-z$ is invertible as an operator from $\hfrak_{+1}$ to $\hfrak_{-1}\supset\hfrak$, there exists a unique $\Phi\in\hfrak_{+1}$ solving the equality above, given by $\Phi=\mathcal{U}^{-1}(z)\mathcal{G}^{-1}(z)\mathcal{L}^{-1}(z)\Psi$; but, since $\Psi\in\hfrak$, by Eq.~\eqref{eq:notice} $\mathcal{L}^{-1}(z)\Psi\in\hfrak$ as well; thus, recalling that $\mathcal{G}(z)$, as an operator on $\hfrak$, has domain $\mathcal{D}(H_{\rm free})$, its inverse maps $\hfrak$ in $\mathcal{D}(H_{\rm free})$ and therefore $\mathcal{G}^{-1}(z)\mathcal{L}^{-1}(z)\Psi\in\mathcal{D}(H_{\rm free})$. We have proven that the unique solution can be written as $\Phi=\mathcal{U}^{-1}(z)\tilde{\Phi}$ for some $\tilde{\Phi}\in\mathcal{D}(H_{\rm free})$ and thus (recall Remark~\ref{rem:independent}) belongs to $\mathcal{D}(H_{\rm free})$. Since $\Psi\in\hfrak$ was chosen arbitrarily, we have proven that the range of $H_{f_1,\dots,f_N}-z$, as an operator on $\hfrak$ with domain $\mathcal{D}(H_{f_1,\dots,f_N})$, coincides with the whole Hilbert space $\hfrak$ and is thus self-adjoint, finally proving (i).

To prove (ii) just notice that, if $f_1,\dots,f_N\in\hilb$, the second part of Lemma~\ref{lemma:gammas} implies that $\mathcal{U}^{-1}(z)$ maps $\mathcal{D}(H_{\rm free})$ into itself, whence $\mathcal{D}(H_{f_1,\dots,f_N})=\mathcal{D}(H_{\rm free})$: the two domains thus coincide, whence the corresponding operators coincide.

Let us finally prove (iii). Since $\hilb$ is densely embedded in $\hilb_{-1}$, for all $j=1,\dots,N$ there exists a sequence $\{f^i_j\}_{i\in\mathbb{N}}$ such that $\|f^i_j-f_j\|_{-1}\to0$ as $i\to\infty$, and by Prop.~\ref{prop:af} this implies $a^\dag(f^i_j)\to\adag{f_j}$ (resp. $a(f^i_j)\to\a{f_j}$) in the norm of $\mathcal{B}(\fock,\fock_{-1})$ (resp. $\mathcal{B}(\fock_{+1},\fock)$). As a direct consequence, denoting by $\mathcal{S}^i_\nu(z),\mathcal{G}^i_\nu(z)$ the operators associated with the form factors $f^i_1,\dots,f^i_N$ as per Eqs.~\eqref{eq:gnuz}--\eqref{eq:snuz}, we have $\mathcal{S}^i_\nu(z)\to\mathcal{S}_\nu(z)$ in the norm of $\mathcal{B}(\hfrak^{(\nu)})$. This also implies $[\mathcal{G}^i_\nu(z)]^{-1}\to[\mathcal{G}_\nu(z)]^{-1}$ in the norm of $\mathcal{B}(\hfrak^{(\nu)})$: indeed,
\begin{eqnarray}
	[\mathcal{G}_\nu(z)]^{-1}-[\mathcal{G}^{i}_\nu(z)]^{-1}&=&[\mathcal{G}^{i}_\nu(z)]^{-1}\left[\mathcal{G}^{i}_\nu(z)-\mathcal{G}_\nu(z)\right][\mathcal{G}_\nu(z)]^{-1}\nonumber\\
	&=&[\mathcal{G}^{i}_\nu(z)]^{-1}\left[\mathcal{S}_\nu(z)-\mathcal{S}^i_\nu(z)\right][\mathcal{G}_\nu(z)]^{-1}
\end{eqnarray}
and thus
\begin{eqnarray}
	\left\|[\mathcal{G}_\nu(z)]^{-1}-[\mathcal{G}^{i}_\nu(z)]^{-1}\right\|_{\mathcal{B}(\hfrak^{(\nu)})}&\leq&\left\|[\mathcal{G}^{i}_\nu(z)]^{-1}\right\|_{\mathcal{B}(\hfrak^{(\nu)})}\left\|\left[\mathcal{S}_\nu(z)-\mathcal{S}^i_\nu(z)\right][\mathcal{G}_\nu(z)]^{-1}\right\|_{\mathcal{B}(\hfrak^{(\nu)})}\nonumber\\
	&\leq&\frac{1}{|\!\Im z|}\left\|\left[\mathcal{S}_\nu(z)-\mathcal{S}^i_\nu(z)\right][\mathcal{G}_\nu(z)]^{-1}\right\|_{\mathcal{B}(\hfrak^{(\nu)})}\to0,
\end{eqnarray}
where we also used the bound~\eqref{eq:invgnorm} in Lemma~\ref{lemma:concatenated}. This also readily implies that the corresponding operators $\gamma^i_{\nu,\nu'}(z)$ as in Eqs.~\eqref{eq:gammanu}--\eqref{eq:gammanu4} satisfy $\gamma^i_{\nu,\nu'}(z)\to\gamma_{\nu,\nu'}(z)$, whence, by Eq.~\eqref{eq:resolvent}, norm resolvent convergence is ensured.
\end{proof}

\begin{remark}
	It is instructive to take a closer look to $\mathcal{D}(H_{f_1,\dots,f_N})$. Using the explicit expression~\eqref{eq:u(z)inv}, the most generic vector in the domain can be expressed as
	\small\begin{equation}
	\left[\begin{array}{rrrrrrrrrrr}
		\vphantom{\vdots}\tilde{\Phi}_0&+&\gamma_{0,1}^\dag(z_0)\tilde{\Phi}_1&+&\gamma_{0,2}^\dag(z_0)\tilde{\Phi}_2&+&\ldots&+&\gamma_{0,N-1}^\dag(z_0)\tilde{\Phi}_{N-1}&+&\gamma_{0,N}^\dag(z_0)\tilde{\Phi}_N\\\hline 
		&&\vphantom{\vdots}\tilde{\Phi}_1&+&\gamma_{1,2}^\dag(z_0)\tilde{\Phi}_2&+&\ldots&+&\gamma_{1,N-1}^\dag(z_0)\tilde{\Phi}_{N-1}&+&\gamma_{1,N}^\dag(z_0)\tilde{\Phi}_N \\\hline
		&&&&\vphantom{\vdots}\tilde{\Phi}_2&+&\ldots&+&\gamma_{2,N-1}^\dag(z_0)\tilde{\Phi}_{N-1}&+&\gamma_{2,N}^\dag(z_0)\tilde{\Phi}_N \\\hline
		&&&&&&&&\vdots\quad&&\vdots\quad \\	 \hline
		&&&&&&&&\vphantom{\vdots}\tilde{\Phi}_{N\!-1}&+&\gamma_{N\!-1,N}^\dag(z_0)\tilde{\Phi}_N\\\hline
		&&&&&&&&&&\vphantom{\vdots}\tilde{\Phi}_N
	\end{array}\right]
\end{equation}\normalsize
for some $z_0\in\mathbb{C}\setminus\mathbb{R}$ and $\tilde{\Phi}_\nu\in\mathcal{D}^{(\nu)}$: each component $\Phi_\nu$ of the generic state $\Phi\in\mathcal{D}(H_{f_1,\dots,f_N})$ is given by
\begin{equation}
	\Phi_\nu=\tilde{\Phi}_\nu+\sum_{\nu'=\nu+1}^N\gamma_{\nu,\nu'}^\dag(z_0)\tilde{\Phi}_{\nu'},
\end{equation}
thus effectively involving a coupling between the $\nu$--excited spins component and the ones in which $\nu'>\nu$ spins are excited. Like in the single-spin case, the role of such terms is to ``cancel out'' the ``divergent'' terms (i.e.~those outside $\hfrak$) produced by the creation operators contained in the operators $A_{\nu,\nu+1}^\dag$; those terms are not needed in the regular case.
\end{remark}

\begin{remark}\label{rem:alternative}
An alternative representation of the domain of $H_{f_1,\dots,f_N}$ can be obtained. The operator in Eq.~\eqref{eq:uz} can be more compactly written as $\mathcal{U}(z)=\mathrm{I}-\mathcal{G}^{-1}(z)A^\dag$, with $A$ as in Eq.~\eqref{eq:a}. As such, the domain of $H_{f_1,\dots,f_N}$ can be written as
\begin{equation}
	\mathcal{D}(H_{f_1,\dots,f_N})=\left\{\Phi=\left[\mathrm{I}-\mathcal{G}^{-1}(z_0)A^\dag\right]^{-1}\tilde{\Phi}:\;\tilde{\Phi}\in\mathcal{D}(H_{\rm free})\right\},
\end{equation}
or in the equivalent implicit form
\begin{eqnarray}\label{eq:implicit}
	\mathcal{D}(H_{f_1,\dots,f_N})=\left\{\Phi\in\hfrak_{+1}:\;\left[\mathrm{I}-\mathcal{G}^{-1}(z_0)A^\dag\right]\Phi\in\mathcal{D}(H_{\rm free})\right\},
\end{eqnarray}
where we have used the fact that $\mathcal{U}^{-1}(z_0)$ has values in $\hfrak_{+1}$, whence necessarily $\Phi\in\hfrak_{+1}$. This representation accounts for a direct comparison with the abstract results of~\cite{posilicano2020self} via the general approach briefly recalled in Section~\ref{sec:intro}, as well as those obtained for other models via the IBC method, e.g. in~\cite{lampart2018particle,lampart2019nelson}. While leaving a detailed discussion on this point for future research, we point out that, using the notation of the present paper, the second claim of~\cite[Theorem 3.13]{posilicano2020self} can be stated as follows:\footnote{The operator $T$ in the statement of~\cite[Theorem 3.13]{posilicano2020self} corresponds, in our case, to the bounded operator $A(H_{\rm free}-z_0)^{-1}A^\dag$. Pay attention to the fact that the spaces denoted as $\hfrak_s$ in this paper correspond to those denoted by $\hfrak_{s/2}$ in~\cite{posilicano2020self}.} the restriction of $H_{\rm free}+A+A^\dag$ to the domain
\begin{equation}\label{eq:implicit2}
	\hat{\mathcal{D}}(H_{f_1,\dots,f_N})=\left\{\Phi\in\hfrak_{+1}:\;\left[\mathrm{I}-(H_{\rm free}-z_0)^{-1}A^\dag\right]\Phi\in\mathcal{D}(H_{\rm free})\right\}
\end{equation}
is self-adjoint, provided that both the kernel and the range of $A\restriction\mathcal{D}(H_{\rm free})$ are dense in $\hfrak$. 

In fact, it is easy to see $\mathcal{D}(H_{f_1,\dots,f_N})=\hat{\mathcal{D}}(H_{f_1,\dots,f_N})$. Indeed, the operator $\mathcal{G}(z_0)-(H_{\rm free}-z_0)^{-1}$ is block diagonal, with each block being given by $\mathcal{G}_\nu(z_0)-(H_\nu-z_0)^{-1}$; but, recalling the recursive definition~\eqref{eq:gnuz} and the second resolvent identity,
\begin{equation}
	\mathcal{G}_\nu^{-1}(z_0)-(H_\nu-z_0)^{-1}=-\mathcal{G}_\nu(z_0)^{-1}S_\nu(z_0)(H_\nu-z_0)^{-1};
\end{equation}
since $(H_\nu-z_0)^{-1}:\hfrak_{-1}\rightarrow\hfrak_{+1}$, $\mathcal{S}_\nu(z)$ is bounded in $\hfrak$, and $\mathcal{G}_\nu(z_0)^{-1}$ maps $\hfrak$ in  $\mathcal{D}(H_{\rm free})$, the operator above maps $\hfrak_{-1}$ in $\mathcal{D}(H_{\rm free})$. This implies that, given $\Phi\in\hfrak_{+1}$, the difference between $[\mathrm{I}-(H_{\rm free}-z_0)^{-1}A^\dag]\Phi$ and $[\mathrm{I}-\mathcal{G}^{-1}(z_0)A^\dag]\Phi$ is always in $\mathcal{D}(H_{\rm free})$, thus proving the equivalence between the representations. This generalizes what was observed in~\cite{lonigro2022renormalization} for the case $N=1$.
\end{remark}

\paragraph*{Acknowledgments}
We thank Sascha Lill for fruitful discussions. This work was partially supported by “Istituto Nazionale di Fisica Nucleare” (INFN) through the project “QUANTUM” and the Italian National Group of Mathematical Physics (GNFM-INdAM). We acknowledge financial support by MIUR via PRIN 2017 (Progetto di Ricerca di Interesse Nazionale), project QUSHIP (2017SRNBRK), and by European Union–NextGenerationEU (CN00000013 – “National Centre for HPC, Big Data and Quantum Computing”). 
	
\paragraph{Data availability statement}
Data sharing not applicable to this article as no datasets were generated or analyzed during the current study.

\paragraph{Competing interests} The author has no competing interests to declare that are relevant to the content of this article.
	
\AtNextBibliography{\small}
\DeclareFieldFormat{pages}{#1}\sloppy 
\printbibliography

\end{document}